\numberwithin{equation}{section}
\theoremstyle{plain}
\newtheorem{theorem}{Theorem}[section]
\newtheorem{lemma}[theorem]{Lemma}                              
\newtheorem{proposition}[theorem]{Proposition}
\theoremstyle{definition}
\newtheorem{definition}[theorem]{Definition}
\newtheorem{remark}[theorem]{Remark}
\newtheorem{assumption}[theorem]{Assumption}
\def \b {{\beta}}
\def \b {{\beta}}
\def \d {{\delta}}
\def \G {{\Gamma}}
\def \Gt {{\tilde{\Gamma}}}
\def \sig {{\sigma}}
\def \R {{\mathds {R}}}
\def \N {{\mathds {N}}}
\def \e {{\varepsilon}}
\def \eps {{\varepsilon}}
\def \tilde {\widetilde}
\def\p{\partial}
\newcommand{\<}{\langle}
\renewcommand{\>}{\rangle}
\renewcommand{\(}{\left(}
\renewcommand{\)}{\right)}
\renewcommand{\[}{\left[}
\renewcommand{\]}{\right]}
\newcommand\Eb{\mathds{E}}
\newcommand\Qb{\mathds{Q}}
\newcommand\Rb{\mathds{R}}
\newcommand\Ac{\mathscr{A}}
\newcommand\Act{\tilde{\mathscr{A}}}
\newcommand\Bc{\mathscr{B}}
\newcommand\Fc{\mathscr{F}}
\newcommand\Gc{\mathscr{G}}
\newcommand\Lc{\mathscr{L}}
\newcommand\Mc{\mathscr{M}}
\newcommand\Nc{\mathscr{N}}
\newcommand\Oc{\mathscr{O}}
\newcommand\Pc{\mathscr{P}}
\newcommand\Xc{\mathscr{X}}
\newcommand\Xct{\tilde{\mathscr{X}}}
\newcommand\Vc{\mathscr{V}}
\renewcommand\phi{\varphi}
\newcommand\Om{\Omega}
\newcommand\gam{\gamma}
\newcommand\Gam{\Gamma}
\newcommand\lam{\lambda}
\newcommand\del{\delta}
\newcommand\Chi{\mathcal{X}}
\newcommand\xb{\bar{x}}
\newcommand\yb{\bar{y}}
\newcommand\sigb{\bar{\sig}}
\newcommand\ub{\bar{u}}
\newcommand\Cv{\mathbf{C}}
\newcommand\mv{\mathbf{m}}
\newcommand\phih{\widehat{\phi}}
\newcommand\hh{\widehat{h}}
\newcommand\Vct{\widetilde{\Vc}}
\newcommand\Xt{\widetilde{X}}
\newcommand\xt{\widetilde{x}}
\newcommand\zt{\widetilde{z}}
\newcommand\Zt{\widetilde{Z}}
\newcommand\ut{\widetilde{u}}
\newcommand\at{\widetilde{a}}
\newcommand\bt{\widetilde{b}}
\newcommand\ft{\widetilde{f}}
\newcommand\ct{\widetilde{c}}
\renewcommand\d{\partial}
\newcommand\ii{\mathtt{i}}
\newcommand\dd{\mathrm{d}}
\newcommand\ee{\mathrm{e}}
\newcommand\BS{\text{\rm BS}}
\begin{document}

\title{Leveraged ETF implied volatilities from ETF dynamics}

\author{
Tim Leung
\thanks{Industrial Engineering \& Operations Research Department, Columbia University, New York, NY 10027.
\textbf{E-mail:} \url{leung@ieor.columbia.edu}.}
\and
Matthew Lorig
\thanks{Department of Applied Mathematics, University of Washington, Seattle, WA 98195.
\textbf{E-mail:} \url{mattlorig@gmail.com}.
Work partially supported by NSF grant DMS-0739195.}
\and
Andrea Pascucci
\thanks{
Dipartimento di Matematica,
Universit\`a di Bologna, Bologna, Italy.
\textbf{E-mail:} \url{andrea.pascucci@unibo.it}.}
}

\date{This version: \today}

\maketitle

\begin{abstract}
The growth of the exchange-traded fund (ETF) industry has given rise to the trading of options written on ETFs and their leveraged counterparts {(LETFs)}. We study the relationship between the ETF and LETF implied volatility surfaces when the underlying ETF is modeled by a general class of local-stochastic volatility models.  A closed-form approximation for prices is derived for European-style options whose payoff depends on the terminal value of the ETF and/or LETF.  Rigorous error bounds for this pricing approximation are established.  A closed-form approximation for implied volatilities is also derived.   We also discuss a scaling procedure for comparing implied volatilities across leverage ratios.  The implied volatility expansions and scalings are tested in three well-known settings: CEV, Heston and SABR.
\end{abstract}

\noindent
\textbf{Keywords}:  implied volatility, local-stochastic volatility, leveraged exchange-traded fund, implied volatility scaling

%
%

\section{Introduction}
\label{sec:intro}
The market of exchange-traded funds (ETFs) has been growing at a robust pace since their introduction in 1993\footnote{The first US-listed ETF, the SPDR S\&P 500 ETF (SPY),  was launched on  January 29th,  1993.}. As of the end of 2012, the global ETF industry has over
\$1.8 trillion in assets under management (AUM) comprised  of 4,272 products, and has seen close to \$200 billion of positive capital inflows\footnote{``2013 ETF \& Investment Outlook" by David Mazza, SPDR  ETF Strategy \& Consulting, State
Street Global Advisors. Available at \url{http://www.spdr-etfs.com}.}. In recent years, a sub-class of  ETFs, called {leveraged} ETFs (LETFs),  has gained popularity among investors for their accessibility and liquidity for leveraged positions. These funds are  designed to replicate multiples of the daily returns of some reference index or asset.  For instance, the ProShares S\&P 500 Ultra  (SSO) and UltraPro   (UPRO) are advertised to generate, respectively, 2 and 3 times of the daily
returns of  the S\&P 500 index, minus a small expense fee.  On the other hand, an LETF with a negative leverage ratio   allows investors to take a bearish position on the underlying index  by longing the fund.   An example is the ProShares S\&P 500 UltraShort  (SDS)   with leverage ratio of $-2$. The most   typical  leverage ratios are  $\{-3, -2, -1, 2, 3\}$. With the same reference, such as the S\&P 500, these LETFs share  very similar sources of randomness, but they also exhibit different path behaviors (see \cite{cheng} and \cite{avellaneda1}).

The use of ETFs has also led to increased trading of  options  written on ETFs.  During  2012, the total  options contract volume traded at    Chicago Board Options   Exchange  (CBOE)  is 1.06 billion contracts, of which  282 million contracts  are ETF options while 473 million are equity options.   This leads to an  important  question of consistent pricing of options on ETFs and LETFs with the same reference. Since options  are commonly quoted and compared in terms of implied volatility, it is natural to consider the implied volatility relationships among LETF options, not only across strikes and maturities, but also for various leverage ratios.

In this paper, we analyze the implied volatility surfaces associated with European-style LETF options in a general class of local-stochastic volatility (LSV) models.  {Our approach is to (i) find an expansion for approximate LETF option prices (ii) establish rigorous error bounds for this approximation and (iii) translate the price approximation into approximate implied volatilities.  Exact pricing and implied volatility formulas in a general LSV setting are obviously impossible to obtain.  There are a number of approaches one could feasibly take in order to approximate European-style option prices and their associated implied volatilities.
We review some recent approaches for unleveraged products here.
\cite{Gatheral2012} use heat kernel methods in a local volatility setting.
\cite{benhamou2010time} use a small volatility of volatility expansion for the time-dependent Heston model.  More recently, \cite{bompisgobet2013} use Malliavin calculus to obtain approximations in a quite general LSV setting.
And, \cite{forde-jacquier-uncorrelated} use the Freidlin-Wentzell theory of large deviations to analyze an uncorrelated LSV model.}

In this paper, we use a polynomial operator expansion technique to obtain approximate prices and implied volatilities.
{The polynomial operator expansion technique was first introduced in
\cite{Pagliarani2012} and \cite{Pagliarani2013a} to compute option prices in a scalar jump-diffusion setting.  It was further developed in \cite{lorig-pagliarani-pascucci-2} to obtain approximate prices and implied volatilities in a multidimensional local-stochastic volatility setting (see also \cite{lorig-pagliarani-pascucci-1} for pricing approximations for models with jumps).}
The reason for basing our expansions on the methods developed in \cite{lorig-pagliarani-pascucci-2} is
that these methods allow us to consider a large class of LSV models for the ETF; many
of the above mentioned methods work only for specific ETF dynamics.
However, without further development, the methods described in \cite{lorig-pagliarani-pascucci-2} are not
sufficient for the rigorous error bounds we establish in this paper.  Indeed, in
\cite{lorig-pagliarani-pascucci-2}, error bounds are established under a uniform ellipticity
assumption.  As we shall see, the generator of the joint ETF/LETF process is not elliptic.  As
such, to establish rigorous error bounds for LETF option prices, we must work in this challenging
non-elliptic setting.

Perhaps the most useful  result of our analysis is the general expression we obtain for the implied volatility expansion.  This expansion allows us to pinpoint the non-trivial  role played by the leverage ratio $\beta$, and thus,  relate the  implied volatility surfaces  between  (unleveraged) ETF and LETF options.  This also  motivates us to apply the idea of \textit{log-moneyness scaling}, with the  objective  to  view the implied volatilities across leverage ratios on the same scale and orientation. In particular, for a negative leverage ratio and up to the first-order in log-moneyness,  the LETF implied volatility  is known to be upward sloping while the ETF and long-LETF implied volatilities are downward sloping (see \cite{leungsircarLETF}). The scaling is capable of appropriately adjusting the level and shape of the  implied volatility so that the ETF and LETF implied volatilities  match closely under a given  model. For illustration, we test our implied volatility expansions and the log-moneyness scaling  in three well-known settings: CEV, Heston and SABR, and find that they are very accurate.

{In the recent paper,  \cite{leungsircarLETF} apply  asymptotic techniques to understand the link between implied volatilities of the  ETF and  LETFs of different leverage ratios within a multiscale stochastic volatility framework (see \cite{fpss} for a review of multiscale methods). They also introduce implied volatility scaling procedure, different from our own, in order to identify  possible price discrepancies  in the ETF and LETF options markets. In contrast to their work, the current paper studies the problem in a general LSV framework, which naturally includes well-known models such as CEV, Heston and SABR models, among others.  Moreover, while \cite{leungsircarLETF} obtain an implied volatility approximation that is linear in log-moneyness, we
provide a general expression for LETF implied volatilities that is quadratic in log-moneyness.  We also provide formulas for three specific models (CEV, Heston and SABR) that are cubic in log-moneyness.}
\par
{\cite{haughLETF} propose a heuristic  approximation to compute LETF option prices with Heston stochastic volatility and jumps for the underlying.}  While they do not investigate the implied volatilities, they point out that if the underlying ETF admits the Heston (no jumps) dynamics, then  the LETF also has Heston dynamics with different parameters.  As a particular example of LSV models, {we also obtain the same result revealed through our implied volatility expansions (see Section \ref{sec:heston}).

\par
The rest of this paper proceeds as follows.  In Section \ref{sec:model} we review how LETF dynamics are related to ETF dynamics in a general diffusion setting.  We then introduce Markov dynamics for a general class of LSV models for the ETF.  Next, in Section \ref{sec:pricing}, we formally construct an asymptotic expansion for European-style options whose payoff depends on the terminal value of the ETF and/or LETF.  Rigorous error bounds for our pricing approximation are established in Section \ref{sec:accuracy}.  In Section \ref{sec:imp.vol} we translate our asymptotic expansion for prices into an asymptotic expansion for implied volatilities.  We also discuss some natural scalings of the implied volatility surface of the LETF.  Finally, in Section \ref{sec:examples} we implement our implied volatility expansion in three well-known settings: CEV, Heston and SABR.  Some concluding remarks are given in Section \ref{sec:conclusion}.

%
%

\section{Leveraged ETF dynamics}
\label{sec:model}
We take as given an equivalent martingale measure $\Qb$, chosen by the market on a complete filtered probability space $(\Om,\Fc,\{\Fc_t,t\geq0\},\Qb)$.  The filtration $\{\Fc_t,t\geq0\}$ represents the history of the market.  All stochastic processes defined below live on this probability space and all expectations are taken with respect to $\Qb$.
For simplicity, we assume a frictionless market, no arbitrage, zero interest
rates and no dividends.  We will discuss how to relax these assumptions in Remark \ref{rmk:interest}.
\par
Let $S$ be the price process of an Exchange-Traded Fund (ETF).  We assume $S$ can be modeled under $\Qb$ as a strictly positive It\^o diffusion.  Specifically, we have
\begin{align}
\text{ETF}:&&
S_t
    &=  \ee^{ X_t } , &
\dd X_t
   &=  -\frac{1}{2} \sig_t^2 \dd t + \sig_t \, \dd W_t^x , \label{eq:S}
\end{align}
where $\sig$ is a strictly positive stochastic process.
Note that the drift is fixed by the volatility so that $S$ is a martingale.  Let $L$ be the price process of a Leveraged Exchange-Traded Fund (LETF) with underlying $S$ and with leverage ratio $\beta$.  Typical values of $\beta$ are $\{-3,-2,-1,2,3\}$.
The LETF is
managed as follows: for every unit of currency a trader invests in $L$, the LETF manager borrows $(\beta-1)$ units of currency and invests $\beta$ units of currency in $S$.
The fund manager also typically charges the trader a small expense rate, which, for simplicity, we
assume is zero. Then the dynamics of $L$ are related to $S$ as follows
\begin{align}
\frac{\dd L_t}{L_t}
    =  \beta \, \frac{ \dd S_t}{S_t} =\beta \sig_t \, \dd W_t^x,
\end{align}
and thus we have
\begin{align}
\text{LETF}:&&
L_t
    &=  \ee^{Z_t} , &
\dd Z_t
  &=  -\frac{1}{2} \beta^2 \sig_t^2 \, \dd t + \beta \sig_t \, \dd W_t^x . \label{eq:L}
\end{align}
Comparing \eqref{eq:S} with \eqref{eq:L}, we observe that the volatility of $L$ is scaled by a factor of $\beta$. Moreover, {as shown by \cite{avellaneda1}, one can solve explicitly the SDE for $Z$ in order to obtain an expression for $Z_t$ in terms of $X_t$ and the quadratic variation (integrated variance) of $X$ up to time $t$.  Specifically, we have
\begin{align}\label{eq:ZinX}
 Z_t - Z_0
    &=  \beta \, (X_t- X_0) - \frac{\beta ( \beta - 1 )}{2}  \int_0^t\sig_s^{2} \, \dd s.
\end{align}
Equation \eqref{eq:ZinX} shows that the $\log$ returns of an LETF is the sum of two terms.  The first term is proportional to the $\log$ returns of the underlying ETF.  The second term is proportional to the integrated variance of $X$, and highlights the fact that options on LETFs are \emph{path dependent} options.}
Note that, for leverage ratio $\beta\in\{-3, -2, -1, 2, 3\}$, the coefficient $- \frac{\beta ( \beta - 1 )}{2}$ of the realized variance is strictly negative.  Note also that $- \frac{\beta ( \beta - 1 )}{2}$ is an asymmetric function of $\beta$.


\subsection{Local-stochastic volatility framework}
We now specialize to the Markov setting.  We introduce an auxiliary process $Y$, which is intended to capture effects such as stochastic volatility.  We assume that the triple $(X,Y,Z)$ can be modeled by the following Stochastic Differential Equation (SDE):
\begin{align}
\begin{aligned}
\dd X_t
   &=  -\frac{1}{2} \sig^2(t,X_t,Y_t) \dd t + \sig(t,X_t,Y_t) \dd W_t^x , \\
\dd Y_t
    &=  c(t,X_t,Y_t) \dd t + g(t,X_t,Y_t) \dd W_t^y , \\
\dd Z_t
  &=  -\frac{1}{2} \beta^2 \sig^2(t,X_t,Y_t) \dd t
            + \beta \sig(t,X_t,Y_t) \dd W_t^x , \\
\dd \< W^x, W^y \>_t
    &=  \rho(t,X_t,Y_t) \dd t .
\end{aligned} \label{eq:SDE.0}
\end{align}
We assume that SDE \eqref{eq:SDE.0} has a unique strong solution and that the coefficients $(\sig,c,\rho)$ are smooth.  Sufficient conditions for a unique strong solution are given in \cite{Ikedabook}.  The class of models described by \eqref{eq:SDE.0} enjoys the following features:
\begin{enumerate}
\item \textbf{Stochastic Volatility}: When $\sig$ and $\rho$ are functions of $(t,y)$ only (as they would be in a stochastic volatility model such as Heston), then the pairs $(X,Y)$ and $(Y,Z)$ are Markov processes.  From a mathematical point of view, the lack of $x$-dependence in the correlation $\rho$ and volatility $\sig$ greatly simplifies the pricing and implied volatility analysis, since calls written on $Z$ can be analyzed independently from calls on $X$.
\item \textbf{Local Volatility}: If both $\sig$ and $\rho$ are dependent on $(t,x)$ only (as they would be in a local volatility model such as CEV), then $X$ alone and the pair $(X,Z)$ are Markov processes.  In this case, calls on $X$ can be analyzed separately from $Z$.  However, calls on $Z$ must be analyzed in conjunction with $X$.
\item \textbf{Local-Stochastic Volatility}: If $\sig$ and $\rho$ depend on  $(x,y)$ (as would be the case in a local-stochastic volatility setting such as SABR), then the pair $(X,Y)$ is a Markov process, as is the triple $(X,Y,Z)$.  In this case, options on $X$ can be analyzed independently from $Z$.  In contrast, to analyze options on $Z$, one must consider the triple $(X,Y,Z)$.
\item If $\beta=1$,
then from \eqref{eq:SDE.0} we see that $\dd X_t = \dd Z_t$.  Thus, we need only to obtain prices and implied volatilities for options written on $Z$.  Options written on $X$ can always be obtained by considering the special case $\beta=1$.
\end{enumerate}

%
%

\section{Option pricing}
\label{sec:pricing}
Using risk-neutral pricing and the Markov property of the process $(X,Y,Z)$, we can write the time $t$ price of an option $u(t,x,y,z)$ with expiration date $T>t$ and payoff $\phi(Z_T)$ as the risk-neutral expectation of the payoff
\begin{align}\label{and1}
u(t,x,y,z)
    &=  \Eb[ \phi(Z_T) | X_t = x, Y_t = y, Z_t = z ] .
\end{align}
Under mild assumptions, the function $u$ satisfies the Kolmogorov backward equation
\begin{align}
\( \d_t + \Ac(t) \) u(t)
    &=  0 , &
u(T)
    &=  \phi , \label{eq:u.pde}
\end{align}
where the operator $\Ac(t)$ is given by
\begin{align}
\Ac(t)
    &=  a(t,x,y) \( \( \d_x^2 - \d_x \) + \beta^2 \( \d_z^2 - \d_z \) + 2 \beta \, \d_x \d_z \) \\ & \qquad
            + b(t,x,y) \d_y^2 + c(t,x,y) \d_y + f(t,x,y) \( \d_x \d_y + \beta \, \d_y \d_z \) ,
            \label{eq:A}
\end{align}
with the functions $(a,b,f)$ defined as
\begin{align}
a(t,x,y)
    &=  \tfrac{1}{2} \sig^2(t,x,y) , &
b(t,x,y)
    &=  \tfrac{1}{2} g^2(t,x,y) , &
f(t,x,y)
    &=  g(t,x,y) \sig(t,x,y) \rho(t,x,y) .
\end{align}
For general $(a,b,c,f)$, an explicit solution to \eqref{eq:u.pde} is not available.  Thus, our goal is to find a closed form approximation for the option price $u$ and derive rigorous error bounds for our approximation.
\begin{remark}
\label{rmk:elliptic}
We note that the matrix of second-order derivatives of $\Ac(t)$
\begin{align}
\frac{1}{2}
\begin{pmatrix}
2 a & f & 2 \beta a \\
f & 2 b & \beta f \\
2 \beta a & \beta f & 2 \beta^2 a
\end{pmatrix} \label{eq:matrix}
\end{align}
is singular; the eigenvector $(\beta,0,-1)$ corresponds to eigenvalue zero.  Therefore, the operator $\Ac(t)$ is \emph{not} elliptic.
This gives rise to an additional mathematical challenge in establishing error bounds for the pricing approximation, which we will carry out in Section \ref{sec:accuracy}.
\end{remark}
\begin{remark}[Deterministic interest rates, dividends and expense ratios]
\label{rmk:interest}
Suppose interest rates are a deterministic function $r(t)$ of time.  Suppose also that the ETF holder receives a dividend $q(t) S_t$ per unit time, and the LETF provider charges an expense rate $c(t) L_t$ per unit time where $q(t)$ and $c(t)$ are deterministic functions.  In this case options prices are computed as discounted expectations of the form
\begin{align}
\ut(t,\xt,y,\zt)
    &:= \Eb[ \ee^{-\int_t^T \dd s \,  r(s)} \phi(\Zt_T) | \Xt_t = \xt , Y_t = y, \Zt_t = z] , \\
\dd \Xt_t
    &=  \dd X_t + \( r(t) - q(t) \) \dd t , \\
\dd \Zt_t
    &=  \dd Z_t + \( r(t) - c(t) - \beta q(t) \) \dd t ,
\end{align}
with $(X,Y,Z)$ as given in \eqref{eq:SDE.0}.  Upon making the following change of variables
\begin{align}
u(t,x(t,\xt),y,z(t,\zt))
    &:= \ee^{\int_t^T \dd s \,  r(s)} \ut(t,\xt,y,\zt) , \label{eq:u.2} \\
x(t,\xt)
    &:= \xt + \int_t^T \dd s \, r(s) , \\
z(t,\zt)
    &:= \zt + \int_t^T \dd s \, \( r(s) - c(s) - \beta q(s) \),
\end{align}
a simple application of the chain rule reveals that $u$ as defined in \eqref{eq:u.2} satisfies Cauchy problem \eqref{eq:u.pde}.  Thus, the current framework allows us to readily accommodate these additional features.
\end{remark}

\subsection{Asymptotic prices via Taylor and Dyson series}
\label{sec:dyson}
In this section, we show how Taylor and Dyson series can be combined in order to formally
construct an asymptotic approximation of the solution $u$ of Cauchy problem \eqref{eq:u.pde}.
Throughout the derivation that follows we assume that for every $t$ the coefficients $(a,b,c,f)$
of the operator $\Ac(t)$ are analytic in $(x,y)$ so we can expand each of these functions as a
Taylor series.  As we will see, this assumption is not necessary for the $N$th-order approximation
of $u$, which we will give in Definition \ref{def:ub.N}.  However, making this assumption will
simplify the derivation that follows.
\par
Let $(\xb(\cdot),\yb(\cdot)): [0,T] \to \Rb^2$ be a piecewise continuous map. For any $(t,x,y)$ we
have:
\begin{align}
\chi(t,x,y)
    &=  \sum_{n=0}^\infty \sum_{k=0}^n \chi_{n-k,k}(t) (x-\xb(t))^{n-k} (y-\yb(t))^{k} , \\
\chi_{n-k,k}(t)
    &=  \frac{\d_x^{n-k}\d_y^{k}\chi(t,\xb(t),\yb(t))}{(n-k)!k!} , &
 \chi
    &\in  \{a,b,c,f\} .
\end{align}
Formally, the operator $\Ac(t)$ can now be written as
\begin{align}
\Ac(t)
    &=  \Ac_0(t) + \Bc_1(t) , &
\Bc_1(t)
    &=  \sum_{n=1}^\infty \Ac_n(t) , &
\Ac_n(t)
    &=  \sum_{k=0}^n { (x-\xb(t))^{n-k} (y-\yb(t))^{k} } \Ac_{n-k,k}(t) , \label{eq:A.expand}
\end{align}
where
\begin{align}
\Ac_{n-k,k}(t)
    &=  a_{n-k,k}(t) \( \( \d_x^2 - \d_x \) + \beta^2 \( \d_z^2 - \d_z \) + 2 \beta \, \d_x \d_z \) \\ &\qquad
            + b_{n-k,k}(t) \d_y^2 + c_{n-k,k}(t) \d_y + f_{n-k,k}(t) \( \d_x \d_y + \beta \, \d_y \d_z \) , \label{eq:A.nk}
\end{align}
Inserting expansion \eqref{eq:A.expand} for $\Ac(t)$ into Cauchy problem \eqref{eq:u.pde} we find
\begin{align}
( \d_t + \Ac_0(t) ) u(t)
    &=  - \Bc_1(t) u(t) , &
u(T)
    &=  \phi .
\end{align}
By construction, the operator $\Ac_0(t)$ is the generator of a diffusion with
coefficients that are deterministic functions of time only.  By Duhamel's principle, we therefore
have
\begin{align}
u(t)
    &=  \Pc_0(t,T) \phi + \int_t^T \dd t_1 \, \Pc_0(t,t_1) \Bc_1(t_1) u(t_1) , \label{eq:u.duhemel}
\end{align}
where $\Pc_0(t,T) = \exp \int_t^T \dd s \, \Ac_0(s)$, is the \emph{semigroup} of operators generated by $\Ac_0(t)$; we will provide an explicit form for $\Pc_0(t,T)$ in Section \ref{sec:u0}.
Inserting expression \eqref{eq:u.duhemel} for $u$ back in to the right-hand side of \eqref{eq:u.duhemel} and iterating we obtain
\begin{align}
u(t)
    &=  \Pc_0(t,T) \phi + \sum_{k=1}^\infty
            \int_{t}^T \dd t_1 \int_{t_1}^T \dd t_2 \cdots \int_{t_{k-1}}^T \dd t_k
            \\ & \qquad
            \Pc_0(t,t_1) \Bc_1(t_1)
            \Pc_0(t_1,t_2) \Bc_1(t_2) \cdots
            \Pc_0(t_{k-1},t_k) \Bc_1(t_k)
            \Pc_0(t_k,T) \phi
            \label{eq:dyson} \\
    &=  \Pc_0(t,T) \phi + \sum_{n=1}^\infty \sum_{k=1}^n
            \int_{t}^T \dd t_1 \int_{t_1}^T \dd t_2 \cdots \int_{t_{k-1}}^T \dd t_k
            \\ & \qquad \sum_{i \in I_{n,k}}
            \Pc_0(t,t_1) \Ac_{i_1}(t_1)
            \Pc_0(t_1,t_2) \Ac_{i_2}(t_2) \cdots
            \Pc_0(t_{k-1},t_k) \Ac_{i_k}(t_k)
            \Pc_0(t_k,T) \phi ,
            \label{eq:u.dyson} \\
I_{n,k}
    &= \{ i = (i_1, i_2, \cdots , i_k ) \in \mathds{N}^k : i_1 + i_2 + \cdots + i_k = n \}.
            \label{eq:Ink}
\end{align}
Note that the second to last equality \eqref{eq:dyson} is the classical Dyson series expansion of $u$ corresponding to order zero generator $\Ac_0(t)$ and perturbation $\Bc_1(t)$.  To obtain \eqref{eq:u.dyson} from \eqref{eq:dyson} we have used the fact that, by \eqref{eq:A.expand}, the operator $\Bc_1(t)$ is an infinite sum.
{Rigorous justification for exchanging infinite sums and integrals, which would require additional assumptions, is not intended at this point.  It will be clear in Definition \ref{def:ub.N} that the $N$th-order approximation for $u$ contains only finite sums.}
Expression \eqref{eq:u.dyson} motivates the following definition:
\begin{definition}
\label{def:ub.N}
Let $u$ be given by \eqref{and1}.
Assume that for every $t \in [0,T]$ the coefficients $(a,b,c,f)$ of the
operator $\Ac(t)$ are $N$-times differentiable in the {spatial} variables $(x,y)$. For a fixed
{piecewise continuous map} $(\xb(\cdot),\yb(\cdot)): [0,T] \to \Rb^2$, the \emph{$N$th-order
approximation of $u$}, denoted $\ub_N$, is defined as
\begin{align}
\ub_N
    &=  \sum_{n=0}^N u_n , &
    &\text{where}&
u_0(t)
    &:= \Pc_0(t,T) \phi , \label{eq:u0.def}
\end{align}
and
\begin{align}
u_n(t)
    &:= \sum_{k=1}^n
            \int_{t}^T \dd t_1 \int_{t_1}^T \dd t_2 \cdots \int_{t_{k-1}}^T \dd t_k
            \\ & \qquad \sum_{i \in I_{n,k}}
            \Pc_0(t,t_1) \Ac_{i_1}(t_1)
            \Pc_0(t_1,t_2) \Ac_{i_2}(t_2) \cdots
            \Pc_0(t_{k-1},t_k) \Ac_{i_k}(t_k)
            \Pc_0(t_k,T) \phi \label{eq:un.def} .
\end{align}
Here, $\Ac_i(t)$ and $I_{n,k}$ are as given in \eqref{eq:A.expand} and \eqref{eq:Ink}, respectively, and $\Pc_0(t,T)$ is the semigroup generated by $\Ac_0(t)$.
\end{definition}

\subsection{Expression for $u_0$}
\label{sec:u0}

The action of the semigroup $\Pc_0(t,T)$ generated by $\Ac_0(t)$ when acting on a function
$\theta: \Rb^3 \to \Rb$ is
\begin{align}
\Pc_0(t,T) \theta(x,y,z)
    &=  \int_{\Rb^3} \dd \xi \dd \eta \dd \zeta \, \del_{\bar{z}} ( \zeta  ) \,
            \Gamma_0(t,x,y;T,\xi,\eta) \theta(\xi,\eta,\zeta), \label{eq:P0}
\end{align}
where $\del_{\bar{z}}$ is a Dirac mass centered at
\begin{align}
\bar{z}
    =z + \beta(\xi - x) - \beta(\beta-1)\int_0^t a_{0,0}(s) \dd s , \label{eq:z.bar}
\end{align}
and
\begin{align}
\Gamma_0(t,x,y;T,\xi,\eta)
    &=  \frac{1}{ 2\pi \sqrt{|\Cv|} } \exp\( -\frac{1}{2}\mv^\text{T} \Cv^{-1} \mv \) , \label{eq:Gamma.0}
\end{align}
with the covariance matrix $\Cv$ and vector $\mv$ given by:
\begin{align}
\Cv
    &=  \begin{pmatrix}
            2 \int_t^T a_{0,0}(s) \dd s & \int_t^T f_{0,0}(s) \dd s  \\
            \int_t^T f_{0,0}(s) \dd s & 2 \int_t^T b_{0,0}(s) \dd s
        \end{pmatrix} , &
\mv
    &=  \begin{pmatrix}
            \xi - x + \int_t^T a_{0,0}(s) \dd s \\ \eta - y - \int_t^T c_{0,0}(s) \dd s
         \end{pmatrix} .
\end{align}
Using \eqref{eq:u0.def}, we have $u_0(t)= \Pc_0(t,T) \phi$.  Hence, from \eqref{eq:P0} a direct
computation gives the zeroth-order approximation

\begin{align}
u_0(t,z)
    =   \int_\Rb \dd \zeta \,
            \frac{1}{\sqrt{2\pi s^2(t,T)}} \exp \( \frac{-(\zeta-m(t,T))^2}{2 s^2(t,T)} \)
            \phi(\zeta) ,   \label{eq:u0.bs}
\end{align}
where the mean $m(t,T)$ and variance $s^2(t,T)$ are given by
\begin{align}
m(t,T)
    &= z - \beta^2 \int_t^T \dd t_1 \, a_{0,0}(t_1), &
s^2(t,T)
    &=  2 \beta^2 \int_t^T \dd t_1 \, a_{0,0}(t_1).
\end{align}

\subsection{Expression for $u_n$}
\label{sec:un} The following theorem, and the ensuing proof, show that $u_n(t)$ can be written as
a differential operator $\Lc_n(t,T)$ acting on $u_0(t)$.  {The theorem is written specifically
for Put options, which play an important role in derivative markets.
Call prices, which are also important in derivative markets, can be obtained from Put prices via Put-Call parity.
}
\begin{theorem}
\label{thm:un}
Assume that for every $t \in [0,T]$ the coefficients $(a,b,c,f)$ of the operator $\Ac(t)$ are
$n$-times differentiable in the spatial variables $(x,y)$. {Assume also that $\varphi$ is the
payoff of a Put option on $Z$.  That is, $\varphi(z) = \left(\ee^k-\ee^z\right)^+$.} Then, for a
fixed {piecewise continuous map} $(\xb(\cdot),\yb(\cdot)): [0,T] \to \Rb^2$, the function $u_n$
defined in \eqref{eq:un.def} is given explicitly by
\begin{align}
u_n(t)
    &=  \Lc_n(t,T) u_0(t) , \label{eq:un} &
\end{align}
where $u_0$ is given by \eqref{eq:u0.bs} and
\begin{align}
\Lc_n(t,T)
    &=  \sum_{k=1}^n
            \int_{t}^T \dd t_1 \int_{t_1}^T \dd t_2 \cdots \int_{t_{k-1}}^T \dd t_k
            \sum_{i \in I_{n,k}}
            \Gc_{i_1}(t,t_1)
            \Gc_{i_2}(t,t_2) \cdots
            \Gc_{i_k}(t,t_k) , \label{eq:Ln}
\end{align}
with $I_{n,k}$ as defined in \eqref{eq:Ink} and
\begin{align}
\Gc_n(t,t_i)
    &:= \sum_{k=0}^n \(\Mc_x(t,t_i)-\xb(t_i)\)^{n-k} \(\Mc_y(t,t_i)-\yb(t_i)\)^{k}  \Ac_{n-k,k}(t_i)  \label{eq:G.def} \\
\Mc_x(t,t_i)
    &:= x + \int_{t}^{t_i} \dd s \Big( a_{0,0}(s) \( 2 \d_x + 2 \beta \d_z - 1 \)
                + f_{0,0}(s) \d_y \Big), \label{eq:Mx.def} \\
\Mc_y(t,t_i)
    &:= y + \int_{t}^{t_i} \dd s \Big( f_{0,0}(s) \( \d_x + \beta \d_z \)
                + 2 b_{0,0}(s) \d_y + c_{0,0}(s) \Big). \label{eq:My.def}
\end{align}
\end{theorem}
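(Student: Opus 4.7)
The plan is to prove the identity $u_n(t) = \Lc_n(t,T)u_0(t)$ by iteratively commuting each operator $\Ac_{i_j}(t_j)$ in the Dyson series \eqref{eq:un.def} past the semigroup factor $\Pc_0(\cdot,t_j)$ immediately to its left. After all such commutations, the resulting string of semigroups $\Pc_0(t,t_1)\Pc_0(t_1,t_2)\cdots\Pc_0(t_k,T)$ collapses by the semigroup property to $\Pc_0(t,T)$, whose action on $\phi$ is exactly $u_0(t)$. The residual operator, pushed all the way to the left, is $\Gc_{i_1}(t,t_1)\Gc_{i_2}(t,t_2)\cdots\Gc_{i_k}(t,t_k)$; summing over $i \in I_{n,k}$ and $k=1,\ldots,n$, and integrating in $t_1,\ldots,t_k$, then yields $\Lc_n(t,T)u_0(t)$ as claimed.

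The heart of the argument lies in the commutation identities $\Pc_0(t,s)\,x = \Mc_x(t,s)\,\Pc_0(t,s)$ and $\Pc_0(t,s)\,y = \Mc_y(t,s)\,\Pc_0(t,s)$. I will derive these by first computing the basic commutators
\begin{align*}
[\Ac_0(r),x] &= a_{0,0}(r)\bigl(2\d_x + 2\beta\d_z - 1\bigr) + f_{0,0}(r)\,\d_y,\\
[\Ac_0(r),y] &= f_{0,0}(r)\bigl(\d_x + \beta\d_z\bigr) + 2b_{0,0}(r)\,\d_y + c_{0,0}(r),
\end{align*}
which match exactly the integrands appearing in \eqref{eq:Mx.def} and \eqref{eq:My.def}. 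The crucial observation is that $\Ac_0(r)$, being a differential operator with spatially constant (but time-dependent) coefficients, satisfies $[\Ac_0(r),\Ac_0(r')] = 0$, and each of the above commutators is again a spatially-constant-coefficient differential operator, so it too commutes with every $\Ac_0(r')$. This vanishing of nested commutators forces $\Pc_0(t,s) = \exp\int_t^s \Ac_0(r)\,dr$ to obey $[\Pc_0(t,s),x] = \int_t^s [\Ac_0(r),x]\,dr \cdot \Pc_0(t,s)$, which rearranges to the desired identity; an analogous argument handles $y$. A parallel check shows $[\Mc_x(t,s),\Mc_y(t,s)] = 0$, and the differential operators $\Ac_{n-k,k}(s)$, having spatially constant coefficients, commute outright with $\Pc_0(t,s)$.

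With these pieces in hand, a short induction on monomial degree delivers the operator identity
\begin{align*}
\Pc_0(t,s)\,(x-\xb(s))^{n-k}(y-\yb(s))^{k}\,\Ac_{n-k,k}(s) = \bigl(\Mc_x(t,s)-\xb(s)\bigr)^{n-k}\bigl(\Mc_y(t,s)-\yb(s)\bigr)^{k}\,\Ac_{n-k,k}(s)\,\Pc_0(t,s),
\end{align*}
and summing over $k$ produces the key pushout $\Pc_0(t,s)\,\Ac_n(s) = \Gc_n(t,s)\,\Pc_0(t,s)$. Applying this rule from the innermost factor outward in \eqref{eq:un.def} leaves each $\Gc_{i_j}(t,t_j)$ in place and causes the exposed semigroup pair $\Pc_0(t,t_{j-1})\Pc_0(t_{j-1},t_j)$ to collapse to $\Pc_0(t,t_j)$; after $k$ steps one is left with $\Gc_{i_1}(t,t_1)\cdots\Gc_{i_k}(t,t_k)\,\Pc_0(t,T)\phi$, as required. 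I expect the main obstacle to be the rigorous justification that these formal manipulations are valid when the payoff $\phi(z) = (\ee^k-\ee^z)^+$ lacks smoothness, since $\Gc_n(t,s)$ contains unbounded differential operators in $(x,y,z)$. This is resolved by noting that $u_0 = \Pc_0(t,T)\phi$ is $C^\infty$ with polynomial-bounded derivatives of every order (being a Gaussian convolution of a locally bounded payoff), so each term in $\Lc_n(t,T)u_0(t)$ is well defined pointwise and the algebraic commutations pass to honest equalities of smooth functions.
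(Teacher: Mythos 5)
Your proposal is correct, and it shares the paper's overall architecture: both arguments reduce the theorem to the single intertwining identity $\Pc_0(t,s)\Ac_i(s)=\Gc_i(t,s)\Pc_0(t,s)$ and then collapse the chain of semigroups $\Pc_0(t,t_1)\cdots\Pc_0(t_k,T)$ to $\Pc_0(t,T)\phi=u_0(t)$. Where you genuinely diverge is in how that identity is proved. The paper works with the explicit kernel $\del_{\bar z}(\zeta)\Gamma_0$ of $\Pc_0$: it checks by hand that $\Mc_x$ and $\Mc_y$ act on the kernel as multiplication by the forward variables $\xi$ and $\eta$, then integrates by parts and uses the symmetry of the kernel to trade the adjoint $(\Ac_{n-k,k}^{\xi,\eta,\zeta})^*$ for $\Ac_{n-k,k}^{x,y,z}$ acting on the backward variables. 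You instead run a Lie-algebraic argument: you compute $[\Ac_0(r),x]$ and $[\Ac_0(r),y]$ (both correct, and equal to the integrands in \eqref{eq:Mx.def}--\eqref{eq:My.def}), note that these are constant-coefficient operators so all nested commutators with $\Ac_0$ vanish, and read off $\Pc_0(t,s)\,x=\Mc_x(t,s)\,\Pc_0(t,s)$ from the truncated adjoint expansion of the exponential; the side facts $[\Mc_x,\Mc_y]=0$ and $[\Ac_{n-k,k}(s),\Pc_0(t,s)]=0$ then make the polynomial pushout immediate. Your route is more conceptual: it explains where $\Mc_x,\Mc_y$ come from (conjugation of $x,y$ by the semigroup) without ever writing the Gaussian, and it makes the degenerate $z$-direction invisible rather than something to be tracked through a Dirac mass. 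What the paper's kernel computation buys is the analytic justification almost for free: the identity is verified as an equality of explicit integrals on the class $C_e^\infty$, which is shown to be preserved by differentiation, polynomial multiplication and $\Pc_0$, so the manipulations are legitimate at every stage of the Dyson series despite the non-smooth put payoff. In your version the step $\ee^{A}x\,\ee^{-A}=x+[A,x]$ is a formal operator identity whose validity on the relevant functions still needs checking; your closing remark about the smoothness of $u_0=\Pc_0(t,T)\phi$ covers the final application, but to be complete you should also observe, as the paper does, that every intermediate function $\Ac_{i_{j+1}}(t_{j+1})\Pc_0(t_{j+1},t_{j+2})\cdots\Pc_0(t_k,T)\phi$ remains in such an invariant class. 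That is a bookkeeping point rather than a gap in the idea.
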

\begin{proof}
The proof consists of showing that the operator $\Gc_i(t,t_k)$ in \eqref{eq:G.def} satisfies
\begin{align}
\Pc_0(t,t_k) \Ac_{i}(t_k)
    &=  \Gc_i(t,t_k) \Pc_0(t,t_k) . \label{eq:PA=GP}
\end{align}
Assuming \eqref{eq:PA=GP} holds, we can use the fact that $\Pc_0(t,T)$ satisfies the semigroup property
\begin{align}
\Pc_0(t,T)
    &=  \Pc_0(t,t_1) \Pc_0(t_1,t_2) \cdots \Pc_0(t_{k-1},t_k) \Pc_0(t_k,T) , \label{eq:semigroup.prop}
\end{align}
and we can re-write \eqref{eq:un.def} as
\begin{align}
u_n(t)
    &=  \sum_{k=1}^n
            \int_{t}^T \dd t_1 \int_{t_1}^T \dd t_2 \cdots \int_{t_{k-1}}^T \dd t_k
            \\ & \qquad \sum_{i \in I_{n,k}}
            \Gc_{i_1}(t,t_1)
            \Gc_{i_2}(t,t_2) \cdots
            \Gc_{i_k}(t,t_k)
            \Pc_0(t,T) \phi . \label{eq:u.dyson2}
\end{align}
Note, in deriving \eqref{eq:u.dyson2}, we have repeatedly used \eqref{eq:PA=GP} to move the semigroup operators $\Pc_0(t_i,t_{i+1})$ in \eqref{eq:un.def} past the $\Ac_i(t_{i+1})$ operators.  Then, we used \eqref{eq:semigroup.prop}.
Finally, using that $\Pc_0(t,T) \phi=u_0(t)$, equations \eqref{eq:un}-\eqref{eq:Ln} follow directly from \eqref{eq:u.dyson2}  Thus, we only need to show that $\Gc_i(t,t_k)$ satisfies \eqref{eq:PA=GP}.
\par
To establish \eqref{eq:PA=GP}, we note that
\begin{align}
\Mc_x(t,T) \Big( \del_{\bar{z}}(\zeta) \Gamma_0(t,x,y;T,\xi,\eta) \Big)
    &=  \xi \Big( \del_{\bar{z}}(\zeta) \Gamma_0(t,x,y;T,\xi,\eta) \Big) , \label{eq:Mx.prop} \\
\Mc_y(t,T) \Big( \del_{\bar{z}}(\zeta) \Gamma_0(t,x,y;T,\xi,\eta) \Big)
    &=  \eta \Big( \del_{\bar{z}}(\zeta) \Gamma_0(t,x,y;T,\xi,\eta) \Big) , \label{eq:My.prop}
\end{align}
where $\bar{z}$ is defined in \eqref{eq:z.bar} and $\Gamma_0$ is defined in \eqref{eq:Gamma.0}.  This is a direct computation, which can be checked by hand.  It follows from repeated application of \eqref{eq:Mx.prop} and \eqref{eq:My.prop} that if $p: \Rb^2 \to \Rb $ is a polynomial function, we have
\begin{align}
p\(\Mc_x(t,T),\Mc_y(t,T)\)\Big( \del_{\bar{z}}(\zeta) \Gamma_0(t,x,y;T,\xi,\eta) \Big)
    &=  p(\xi,\eta) \Big( \del_{\bar{z}}(\zeta) \Gamma_0(t,x,y;T,\xi,\eta) \Big) . \label{p}
\end{align}
In what follows we write $\Ac_{n-k,k}^{\xi,\eta,\zeta}(s)$ and $\Ac_{n-k,k}^{x,y,z}(s)$ in order to indicate explicitly which variables these operators act on.  We also denote by $(\Ac_{n-k,k}^{\xi,\eta,\zeta}(s))^*$ the formal adjoint of $\Ac_{n-k,k}^{\xi,\eta,\zeta}(s)$.
Suppose $\theta:\Rb^3 \to \Rb$ is $C^2(\Rb^3)$ and at most exponentially growing.  Then we have
\begin{align}
&\Pc_0(t,s) \Ac_{i}(s) \theta(x,y,z) \\
    &=  \int_{\Rb^3} \dd \xi \dd \eta \dd \zeta \, \del_{\bar{z}}(\zeta) \Gamma_0(t,x,y;s,\xi,\eta)
            \sum_{k=0}^n  \(\xi-\xb(s)\)^{n-k} \(\eta-\yb(s)\)^{k} \Ac_{n-k,k}^{\xi,\eta,\zeta}(s) \theta(\xi,\eta,\zeta) \\
    &=  \sum_{k=0}^n \(\Mc_x(t,s)-\xb(s)\)^{n-k} \(\Mc_y(t,s)-\yb(s)\)^{k} \int_{\Rb^3} \dd \xi \dd \eta \dd \zeta \,
            \del_{\bar{z}}(\zeta) \Gamma_0(t,x,y;s,\xi,\eta)
            \Ac_{n-k,k}^{\xi,\eta,\zeta}(s) \theta(\xi,\eta,\zeta) \\
    &=  \sum_{k=0}^n \(\Mc_x(t,s)-\xb(s)\)^{n-k} \(\Mc_y(t,s)-\yb(s)\)^{k} \int_{\Rb^3} \dd \xi \dd \eta \dd \zeta \,
            \del_{\bar{z}}(\zeta) \theta(\xi,\eta,\zeta) \( \Ac_{n-k,k}^{\xi,\eta,\zeta}(s)\)^*\Gamma_0(t,x,y;s,\xi,\eta) \\
    &=  \sum_{k=0}^n \(\Mc_x(t,s)-\xb(s)\)^{n-k} \(\Mc_y(t,s)-\yb(s)\)^{k} \Ac_{n-k,k}^{x,y,z}(s) \int_{\Rb^3} \dd \xi \dd \eta \dd \zeta \,
            \del_{\bar{z}}(\zeta) \theta(\xi,\eta,\zeta) \Gamma_0(t,x,y;s,\xi,\eta) \\
    &=  \Gc_i(t,s) \Pc_{0}(t,s) \theta(x,y,z) .
\end{align}
The first equality follows from the definitions of $\Pc_0(t,s)$ and $\Ac_{i}(s)$.  In the second equality we have used \eqref{p} and pulled the operators $\Mc_x$ and $\Mc_y$ out of the integral since they act on the backward variables $(x,y,z)$.  In the third equality we have intergrated by parts.  In the fourth equality we have used the symmetry property of the kernel $\del_{\bar{z}}(z) \Gamma_0(t,x,y;s,\xi,\eta)$ to replace $\( \Ac_{n-k,k}^{\xi,\eta,\zeta}(s)\)^*$ with $\Ac_{n-k,k}^{x,y,z}(s)$.  We then pulled $\Ac_{n-k,k}^{x,y,z}(s)$ out of the integral since it acts on the backward variables $(x,y,z)$.  The last equality follows from the definitions of $\Gc_i(t,s)$ and $\Pc_0(t,s)$.  Thus, we have established $\Pc_0(t,s) \Ac_{i}(s)=\Gc_i(t,s) \Pc_{0}(t,s)$, when acting on a function $\theta$ that is $C^2(\Rb^3)$ that is at most exponentially growing.
\par
To complete the proof we must show that terms of the form
\begin{align}
\Pc_0(t,t_1) \Ac_{i_1}(t_1)  \Pc_0(t_1,t_2) \Ac_{i_2}(t_2) \cdots \Pc_0(t_{k-1},t_k) \Ac_{i_k}(t_k) \Pc_0(t_k,T) \phi , \label{eq:PAPAP}
\end{align}
are at least $C^2(\Rb^3)$ and at most exponentially growing.
In fact, we will show that such terms are $C_e^\infty(\Rb^3)$, where $C_e^\infty(\Rb^3)$ denotes the space of functions that are $C^\infty(\Rb^3)$ with derivatives of all orders that are at most exponentially growing.
To see this, we note that $\Pc_0(t_k,T) \phi = u^\BS(t_k)$, where $u^\BS$ is the Black-Scholes price of a put option.
Since derivatives of the Black-Scholes put price with respect to $z$ are $C_e^\infty(\Rb)$
it follows that $\Pc_0(t_k,T) \phi \in C_e^\infty(\Rb^3)$.  Now, note that $C_e^\infty(\Rb^3)$ is invariant under differentiation, multiplication by a polynomial, and transformation by the semigroup operator $\Pc_0(t,s)$.  It follows that any term of the form \eqref{eq:PAPAP} is a member of $C_e^\infty(\Rb^3)$.
\end{proof}
\begin{remark}
\label{rmk:call.payoff}
In fact, Theorem \ref{thm:un} holds directly for Call options as well, since derivatives of the Black-Scholes call price with respect to $z$ are $C_e^\infty(\Rb)$.
\end{remark}
In the following proposition, we provide an alternative characterization of the approximating sequence $(u_n)$ as the solution of a nested sequence of PDEs.  This alternative characterization, which was derived using alternative methods in \cite{lorig-pagliarani-pascucci-4}, will be used in Section \ref{sec:accuracy} for the analysis of the accuracy of the approximation.
\begin{proposition}
\label{thm:pde}
Let $\varphi$ be the payoff of a put option: $\varphi(z)=(\ee^k - \ee^z)^+$.
The sequence of functions $(u_n)$ in \eqref{eq:un}
solves the following nested sequence of Cauchy problems
\begin{align}
(\d_t + \Ac_0(t)) u_0
    &=  0 , &
u_0(T)
    &=  \varphi , \label{u0} \\
(\d_t + \Ac_0(t)) u_n
    &=  - \sum_{k=1}^n \Ac_k(t) u_{n-k}, &
u_n(T)
    &=  0 , &
n
    &\geq 1. \label{un}
\end{align}
\end{proposition}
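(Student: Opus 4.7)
The $n=0$ case is immediate from the construction: $u_0(t)=\Pc_0(t,T)\varphi$, and the backward Kolmogorov equation for the two-parameter semigroup generated by $\Ac_0(t)$ yields $(\partial_t+\Ac_0(t))u_0=0$ together with the terminal condition $u_0(T)=\varphi$. For $n\geq 1$ my plan is to apply $(\partial_t+\Ac_0(t))$ directly to the explicit representation \eqref{eq:un.def}. Writing
\[
u_n(t)\;=\;\sum_{k=1}^{n}\int_{t}^{T}\!dt_1\,\Pc_0(t,t_1)\,H_{n,k}(t_1),
\]
where $H_{n,k}(t_1)$ collects everything appearing to the right of $\Pc_0(t,t_1)$ in \eqref{eq:un.def}---a $(k-1)$-fold time integral terminating in $\Pc_0(t_k,T)\varphi$, and depending only on $t_1,\dots,t_k,T$ but not on $t$---Leibniz's rule produces two contributions: the boundary term at the lower limit gives $-\Pc_0(t,t)H_{n,k}(t)=-H_{n,k}(t)$, while the interior term vanishes because $(\partial_t+\Ac_0(t))\Pc_0(t,t_1)\equiv 0$. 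Since $\Ac_0(t)$ acts on spatial variables only, it commutes with the $t_1$-integration, so bringing it inside is legitimate.

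The remaining task is combinatorial: identify $-\sum_{k=1}^{n}H_{n,k}(t)$ with $-\sum_{k=1}^{n}\Ac_k(t)u_{n-k}(t)$. I would split each multi-index $i=(i_1,\dots,i_k)\in I_{n,k}$ according to the value $i_1=m\in\{1,\dots,n-k+1\}$, observing that $(i_2,\dots,i_k)$ then ranges precisely over $I_{n-m,k-1}$. Pulling $\Ac_m(t)$ outside the remaining integral and relabelling the integration variables $t_2,\dots,t_k\mapsto s_1,\dots,s_{k-1}$, the resulting $(k-1)$-fold integral matches exactly the $(k-1)$-th summand in the definition of $u_{n-m}(t)$. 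Swapping the order of summation over $k$ and $m$ and collecting all summands for a fixed $m$ gives
\[
\sum_{k=1}^{n}H_{n,k}(t)\;=\;\sum_{m=1}^{n}\Ac_{m}(t)\,u_{n-m}(t),
\]
where the case $k=1$, $m=n$ produces the $\Ac_{n}(t)u_{0}(t)$ contribution separately (since $u_0$ is not part of the sum in \eqref{eq:un.def} but is defined directly as $\Pc_0(t,T)\varphi$). The terminal condition $u_n(T)=0$ is immediate because the outermost integral $\int_{T}^{T}\!dt_1$ in \eqref{eq:un.def} collapses at $t=T$.

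The main obstacle is precisely the bookkeeping in this reindexing step: one must verify that decomposing $I_{n,k}$ by its first coordinate, together with the order-swap in the $k$- and $m$-sums, genuinely reconstitutes the full defining expression of $u_{n-m}(t)$ and not just a partial sum. Once that is in place, the remaining ingredients---the semigroup identity $(\partial_t+\Ac_0(t))\Pc_0(t,s)=0$ and the use of Leibniz's rule---are routine. Differentiation under the integral is justified by the regularity established in the proof of Theorem \ref{thm:un}, which shows that all intermediate quantities belong to $C_e^{\infty}(\Rb^3)$; in particular the integrands and their $t$-derivatives are jointly smooth and at most exponentially growing, legitimising the exchange of differentiation and integration.
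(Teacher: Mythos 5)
Your proof is correct, but it runs in the opposite direction from the paper's. The paper argues by induction on $n$: it applies Duhamel's principle to the Cauchy problem \eqref{un}, obtaining $u_n(t)=\sum_{k=1}^n\int_t^T \dd t_0\,\Pc_0(t,t_0)\Ac_k(t_0)u_{n-k}(t_0)$, substitutes the induction hypothesis for $u_{n-k}$, and collapses the nested integrals to recover the defining expression \eqref{eq:un.def}. You instead take \eqref{eq:un.def} as the starting point and verify the PDE directly by applying $(\d_t+\Ac_0(t))$ via Leibniz's rule, which produces the boundary terms $-H_{n,k}(t)$ while the interior terms cancel by the backward equation for $\Pc_0$. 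The key combinatorial content is identical in both arguments --- decomposing $I_{n,k}$ according to the first coordinate $i_1=m$ so that $(i_2,\dots,i_k)$ ranges over $I_{n-m,k-1}$, then swapping the $k$- and $m$-sums --- it is simply read in opposite directions. Your route has the minor advantage of dispensing with induction entirely (the reindexing reconstitutes the \emph{full} defining sum for each $u_{n-m}$, as you correctly insist must be checked, including the separate $k=1$ term $\Ac_n(t)u_0(t)$); the paper's route keeps the Duhamel structure explicit, which is the form actually used later in the error analysis. Your appeal to the $C_e^\infty(\Rb^3)$ regularity established in the proof of Theorem \ref{thm:un} to justify differentiating under the integral sign is the right justification and matches the paper's standing assumptions.
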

\begin{proof}
The proof is by induction.  By Duhamel's principle, the solution to \eqref{u0} and the solution to \eqref{un} with $n=1$ are
\begin{align}
u_0(t)
    &= \Pc_0(t,T) \varphi , &
u_1(t)
    &=  \int_{t_1}^T \dd t_1 \Pc_0(t,t_1) \Ac_1(t_1) \Pc_0(t_1,T) \varphi ,
\end{align}
in agreement with \eqref{eq:u0.def} and \eqref{eq:un.def}.  We now assume expression \eqref{eq:un.def} holds for the first $(n-1)$ terms and show that it holds for the $n$th term.  Once again, using Duhamel's principle, the solution to \eqref{un} is
\begin{align}
u_n(t)
        &=  \sum_{k=1}^n \int_t^T \dd t_0 \Pc_0(t,t_0) \Ac_k(t_0) u_{n-k}(t_0) \\
        &=  \int_t^T \dd t_0 \Pc_0(t,t_0) \Ac_n(t_0) \Pc_0(t_0,T) \phi \\ & \qquad
                + \sum_{k=1}^{n-1} \int_t^T \dd t_0 \Pc_0(t,t_0) \Ac_k(t_0) \sum_{m=1}^{n-k}
                \int_{t_0}^T \dd t_1 \int_{t_1}^T \dd t_2 \cdots \int_{t_{m-1}}^T \dd t_{m} \\ & \qquad
                \sum_{i \in I_{n-k,m}}
            \Pc_0(t_0,t_1) \Ac_{i_1}(t_1)
            \Pc_0(t_1,t_2) \Ac_{i_2}(t_2) \cdots
            \Pc_0(t_{m-1},t_m) \Ac_{i_m}(t_m)
            \Pc_0(t_m,T) \phi \\
        &=  \int_t^T \dd t_0 \Pc_0(t,t_0) \Ac_n(t_0) \Pc_0(t_0,T) \phi \\ & \qquad
                + \sum_{k=1}^{n-1}
                \int_t^T \dd t_0 \int_{t_0}^T \dd t_1 \int_{t_1}^T \dd t_2 \cdots \int_{t_{m-1}}^T \dd t_{m} \\ & \qquad
                \sum_{m=1}^{n-k} \sum_{i \in I_{n-k,m}}
                        \Pc_0(t,t_0) \Ac_k(t_0)
            \Pc_0(t_0,t_1) \Ac_{i_1}(t_1)
            \Pc_0(t_1,t_2) \Ac_{i_2}(t_2) \cdots
            \Pc_0(t_{m-1},t_m) \Ac_{i_m}(t_m)
            \Pc_0(t_m,T) \phi \\
        &=  \sum_{k=1}^n
            \int_{t}^T \dd t_1 \int_{t_1}^T \dd t_2 \cdots \int_{t_{k-1}}^T \dd t_k
            \\ & \qquad \sum_{i \in I_{n,k}}
            \Pc_0(t,t_1) \Ac_{i_1}(t_1)
            \Pc_0(t_1,t_2) \Ac_{i_2}(t_2) \cdots
            \Pc_0(t_{k-1},t_k) \Ac_{i_k}(t_k)
            \Pc_0(t_k,T) \phi ,
\end{align}
which agrees with expression \eqref{eq:un.def}.
\end{proof}
\begin{remark}
\label{rmk:bs}
Note that, by \eqref{eq:u0.bs}, the order zero price $u_0$ is simply an integral of the option payoff $\phi$ versus a Gaussian kernel $\Gam_0$, just as in the Black-Scholes model.  From Theorem \ref{thm:un} we see that higher order terms $u_n$ can be obtained by applying the differential operator $\Lc_n$ to $u_0$.  The operator $\Lc_n$ acts on the backward variable $z$, which is present only in the Gaussian kernel $\Gam_0$, producing (Hermite) polynomials in the forward variable $\zeta$ multiplied by $\Gam_0$.  Thus, every term in the price expansion is of the form
\begin{align}
u_n(t,z)
    =   \int_\Rb \dd \zeta \,
            \frac{p_n(\zeta)}{\sqrt{2\pi s^2(t,T)}} \exp \( \frac{-(\zeta-m(t,T))^2}{2 s^2(t,T)} \)
            \phi(\zeta) .
\end{align}
where the function $p_n$ is a polynomial.  As such, computation times for approximate prices are comparable to the Black-Scholes model.
\end{remark}

%
%

\section{Accuracy of the option-pricing approximation}
\label{sec:accuracy} The goal of this section is to establish a rigorous error bound for the
$N$th-order pricing approximation described in the previous sections.  We will adapt the
methods
from \cite{PaglPas2014}, 
who treat operators $\Ac(t)$ that are {\it locally elliptic}, to our current case, where the
operator $\Ac(t)$ is singular (see Remark \ref{rmk:elliptic}).  Our main error bound is given in
{Theorem} \ref{cor:u} at the end of this section.  In order to prove this {theorem} we introduce
$A(t,x,y)$, the  symmetric and positive semi-definite diffusion matrix of the $(X,Y)$ process:
\begin{align}
A(t,x,y)
    &:= \frac{1}{2}
            \begin{pmatrix}
            2 a(t,x,y) & f(t,x,y) \\ f(t,x,y) & 2 b(t,x,y)
            \end{pmatrix} .
\end{align}
We also introduce $D_{r}(x_{0},y_{0})$, the Euclidean ball
\begin{align}
D_{r}(x_{0},y_{0})=\{(x,y)\in\R^{2} : \left|(x,y)-(x_{0},y_{0})\right|<r\} ,
\end{align}
which is defined for any $(x_{0},y_{0}) \in \R^{2}$ and $r>0$.
Throughout Section \ref{sec:accuracy} we assume the following:
\begin{assumption}
\label{assumption2}
The function $u$ in \eqref{and1} solves the backward Cauchy problem
\begin{align}
 \( \d_t + \Ac(t) \) u(t,x,y,z)
    &=  0, &
(t,x,y,z)
    &\in[0,T) \times D , \label{eq:u.pde2} \\
u(T,x,y,z)
    &=  \phi(z), &(x,y,z)
    &\in D,
\end{align}
where $D$ is a domain in $\R^{3}$.  It is possible, but not required, that $D=\R^{3}$.
\end{assumption}
\begin{assumption}
\label{assumption1}
\begin{enumerate}
\item[]
\item[i)] 
{\it Local boundedness and global regularity}: the coefficients $a,b,c,f$ belong to $L^{\infty}_{\text{\rm
loc}}([0,T]\times D)$ and satisfy
$a(t,\cdot,\cdot),b(t,\cdot,\cdot),c(t,\cdot,\cdot),f(t,\cdot,\cdot)\in C^{N+1}(D)$ for any
$t\in[0,T]$.
\item[ii)] {\it {Local} non-degeneracy}: the diffusion matrix of $A=A(t,x,y)$ is positive definite on some cylinder $[0,T]\times D_{r}(x_{0},y_{0})$.
More precisely, $A=\tilde{A}$ in $[0,T]\times D_{r}(x_{0},y_{0})$ where $\tilde{A}\in
L^{\infty}([0,T]\times \R^{2})$ is a matrix of the form
\begin{align}
 \tilde{A}(t,x,y)= \frac{1}{2}
            \begin{pmatrix}
            2 \at(t,x,y) & \ft(t,x,y) \\ \ft(t,x,y) & 2 \bt(t,x,y)
            \end{pmatrix} , \label{amat}
\end{align}
such that
$\tilde{A}(t,\cdot,\cdot) \in
C_{\text{b}}^{N+1}(\R^{2})$ for any $t\in[0,T]$,
where $C_{\text{b}}^{N}$ denotes the space of continuously differentiable functions
with bounded derivatives up to order $N$,
and
\begin{align}\label{cond:parabolicity}
 M^{-1}|\xi|^2\leq \sum_{i,j=1}^{2}\tilde{A}_{ij}(t,x,y)\xi_{i}\xi_{j}\leq M |\xi|^2, \qquad t\in[0,T],\  (x,y),\xi\in\mathds{R}^2,
\end{align}
for some positive constant $M$. We also require the existence of a function $\ct\in
L^{\infty}([0,T]\times \R^{2})$ such that $\ct(t,\cdot,\cdot)\in C_{\text{b}}^{N+1}(\R^{2})$ for
any $t\in[0,T]$ and $c=\ct$ in $[0,T]\times D_{r}(x_{0},y_{0})$.
\end{enumerate}
\end{assumption}
\begin{assumption}
\label{assumption3}
We assume the payoff function $\phi$ is that of a Put option on $Z$.
That is, $\phi(z)=\left(\ee^k-\ee^z\right)^+$.
\end{assumption}
\begin{remark}
\label{rmk:assumption}
Assumptions \ref{assumption2} and \ref{assumption1} are satisfied by a number of well-known models
including Heston, constant elasticity of variance (CEV) and SABR. Thus, for these models,
we can establish rigorous error bounds for European Put prices.
Error bounds for Call prices, obtained via Put-Call parity, retain the same order of accuracy.
\end{remark}
\begin{remark}
\label{rmk:matt} Note that, although the diffusion matrix $A$ of the process $(X,Y)$ is locally
positive definite, the diffusion matrix \eqref{eq:matrix} of the process $(X,Y,Z)$ remains
singular.  Thus, Cauchy problem \eqref{eq:u.pde} is \emph{not parabolic} at any point.  This
issue, which is not handled in \cite{PaglPas2014}, 
presents a technical challenge that must be overcome in order to establish error estimates for our
pricing approximation $\ub_n$.
\end{remark}
\par
In order to cope with the double degeneracy of the pricing operator (recall, we have a partial
degeneracy in the $(x,y)$ variables and a global degeneracy in the $z$ variable), we now use an
elliptic regularization technique.  Specifically, we introduce a process $Z^\eps$, which is a
modification of the dynamics of $Z$ in \eqref{eq:SDE.0}.  We define
\begin{align}
\dd Z_t^\eps
    &:= \dd Z_t - \frac{1}{2} \eps^2 \dd t + \eps \dd W_t^z , &
\dd \< W^x, W^z\>
    &=  0 , &
\dd \< W^y, W^z\>
    &=  0 , &
\eps
    &\geq 0 .
\end{align}
We denote by $\Ac^{\e}(t)$ the 
infinitesimal generator of the Markov process $(X,Y,Z^\eps)$ and by $u^\eps$ the solution of
Cauchy problem related to $\Ac^{\e}(t)$, with final datum $\phi$. Specifically
\begin{align}
 \( \d_t + \Ac^\eps(t) \) u^\eps(t,x,y,z)
    &=  0, &
(t,x,y,z)
    &\in[0,T) \times D_{\e} , \label{eq:u.pde2eps} \\
u^\eps(T,x,y,z)
    &=  \phi(z), &(x,y,z)
    &\in D_{\e},
\end{align}
where $D_{\e}$ is some domain  of $\R^{3}$.
Thus, $u^\eps$ represents the price of a European {Put} option written on $(X,Y,Z^\eps)$. 
{Assumption \ref{assumption1}-i)} guarantees we can construct $\ub_N^\eps$, {the $N$-th
order approximation of $u^\eps$}, by replacing $\Ac(t)$ with $\Ac^\eps(t)$ in Definition
\ref{def:ub.N}.  Moreover, {for any $\e>0$,} $\Ac^{\e}(t)$ and $\varphi$ satisfy the assumptions of
Theorem 3.1 in \cite{PaglPas2014}, in which local error estimates for $|{u}^\eps - \ub_N^\eps|$
are established.
Below we prove that such error estimates are {\it uniform in $\e$} and therefore
error bounds for the price approximation $\bar{u}_N$ of options written on the Markov process
$(X,Y,Z)$ will follow in the limit as $\eps \to 0$.
\par
It is useful at this point to introduce the process $V^\eps$, which satisfies the following SDE:
\begin{align}
\dd V_t^\eps
    &=  \frac{1}{2}\(\beta(1-\beta) \sig^2(t,X_t,Y_t) - \eps^{2} \)\dd t + \eps \dd W_t^z .
\end{align}
We note that the dynamics of $Z^\eps$ can be written as follows
\begin{align}\label{e10}
\dd Z_t^\eps
    &=  \beta \, \dd X_t + \dd V_t^\eps .
\end{align}
Therefore, rather than considering the generator of $(X,Y,Z^\eps)$, we can consider the generator
of $(X,Y,V^\eps)$, which (with a slight abuse of notation) we denote again by $\Ac^\eps(t)$.  This
operator separates into an operator $\Xc(t)$, which takes derivatives with respect to $(x,y)$, and
an operator $\Vc^\eps(t)$, which takes derivatives with respect to $v$.  That is,
\begin{align}
 \Ac^\eps(t) &=  \Xc(t) + \Vc^\eps(t), \label{ae2} \\
\Xc(t)
    &=  a(t,x,y) \( \d_x^2 - \d_x \) + b(t,x,y) \d_y^2+ f(t,x,y) \d_x \d_y+c(t,x,y) \d_y, \\
\Vc^\eps(t)
    &=  \frac{\e^{2}}{2} \( \d_v^2 - \d_v \)+a(t,x,y)\b(1-\b) \d_v .
\end{align}

The first step in the proof of Theorem 3.1 in \cite{PaglPas2014} consists of extending operator
$\Ac^{\e}(t)$, which is defined on $[0,T]\times D$, to a uniformly elliptic operator
$\Act^{\e}(t)$ on $[0,T]\times \R^{3}$. This can be done by virtue of Assumption
\ref{assumption1}-ii). Indeed, for any $\e\ge 0$, it suffices to define {
\begin{align}
 \Act^\eps(t) &=  \Xct(t) + \Vct^\eps(t), \\
\Xct(t)
    &=  \at(t,x,y) \( \d_x^2 - \d_x \) + \bt(t,x,y) \d_y^2+ \ft(t,x,y) \d_x \d_y+ {\ct(t,x,y)} \d_y, \\
\Vct^\eps(t)
    &=  \frac{\e^{2}}{2} \( \d_v^2 - \d_v \) + \at (t,x,y)\b(1-\b) \d_v .
\end{align}
By Assumption \ref{assumption1}, $\Ac^{\e}(t)=\Act^{\e}(t)$ and $\Xc(t)=\tilde{\Xc}(t)$ in $[0,T] \times D_{r}(x_{0},y_{0})\times \R$.}
Notice that $\Act^{\e}(t)$ and $\tilde{\Xc}(t)$ are uniformly elliptic
operators on $[0,T]\times \R^{3}$ and $[0,T]\times \R^{2}$ respectively. Moreover,
$(\d_t+\Act^\eps(t))$ is uniformly parabolic and has a fundamental solution, denoted by
\begin{align}\label{ae1}
 \Gt^{\e}=\Gt^\eps(t,x,y,v;T,x',y',v'),\qquad t<T ,
\end{align}
{which (by definition) is the solution to
\begin{align}
 (\d_t + \Act^\eps(t) ) \Gt^\eps(t,x,y,v;T,x',y',v')
    &=  0 , & (t,x,y,v)
    &\in [0,T) \times \Rb^3 , \\
 \Gt^\eps(T,\cdot,\cdot,\cdot;T,x',y',v') &= { \del_{x',y',v'}.}
\end{align}
}
In the following lemma, we show that $\Gt^{\e}$ satisfies some Gaussian estimates.
\begin{lemma}\label{t1}
Let $i,j,h,k\in\N_{0}$ with $h+k\le N+2$, and $\bar{T}>0$.
{Then, under Assumption \ref{assumption1}, we have}
\begin{align}\label{e1}
  \left|(x-x')^{i}(y-y')^{j}\p_{x}^{h}\p_{y}^{k}\Gt^\eps(t,x,y,v;T,x',y',v')\right|\le \mathbf{c}_{0}(T-t)^{\frac{i+j-h-k}{2}}
  \G^{(M,\e)}_{\text{\rm heat}}(t,x,y,v;T,x',y',v')
\end{align}
for any $x,y,v,x',y',v'\in\R$, $0\le t<T\le \bar{T}$ and $\e \in (0,1]$.  Here,
$\G^{(M,\e)}_{\text{\rm heat}}$ denotes the fundamental solution of the heat operator
\begin{align}\label{eq:heatoperator}
\d_t + M (\p_{xx}+\p_{yy})+\frac{\e^{2}}{2}\p_{vv},
\end{align}
and $\mathbf{c}_{0}$ is a positive constant that depends only on $M,N,i,j$ and $\bar{T}$. In
particular, the constant $\mathbf{c}_{0}$ is independent of $\eps$.
\end{lemma}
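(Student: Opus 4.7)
The plan is to exploit the block structure of the extended operator $\tilde{\mathcal{A}}^\epsilon(t) = \tilde{\mathcal{X}}(t) + \tilde{\mathcal{V}}^\epsilon(t)$. The operator $\tilde{\mathcal{X}}(t)$ is uniformly elliptic in $(x,y)$ with $\epsilon$-independent ellipticity constant $M$ and has coefficients in $C_b^{N+1}(\mathbb{R}^2)$, while $\tilde{\mathcal{V}}^\epsilon(t)$ contains only $\partial_v$ and $\partial_v^2$ with coefficients depending on $(t,x,y)$ but not on $v$. Since no coefficient of $\tilde{\mathcal{A}}^\epsilon$ depends on $v$, the fundamental solution $\tilde\Gamma^\epsilon$ is translation-invariant in $v$, depending on $(v,v')$ only through $w := v - v'$. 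This is the structural feature that will ultimately let us extract constants uniform in $\epsilon \in (0,1]$: the scalar direction $w$ carries all of the $\epsilon$ dependence, while the genuine parabolicity lives in $(x,y)$ with $\epsilon$-free constants.

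The first step is to reduce to a two-dimensional uniformly parabolic problem indexed by a scalar parameter by taking the partial Fourier transform in $w$. Setting $\hat\Psi^\epsilon(t,x,y;T,x',y';\xi) := \int_{\mathbb{R}} e^{-i\xi w}\tilde\Gamma^\epsilon\,dw$, the transformed kernel satisfies
\begin{align}
\bigl(\partial_t + \tilde{\mathcal{X}}(t) + V_\xi^\epsilon(t,x,y)\bigr)\hat\Psi^\epsilon = 0, \qquad \hat\Psi^\epsilon(T,\cdot) = \delta_{(x',y')},
\end{align}
with the bounded complex potential $V_\xi^\epsilon(t,x,y) = -\tfrac{\epsilon^2\xi^2}{2} + i\xi\bigl(\tilde a(t,x,y)\beta(1-\beta) - \tfrac{\epsilon^2}{2}\bigr)$. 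The parabolic constants of this 2D problem are independent of both $\epsilon$ and $\xi$, so standard theory applies uniformly in these parameters.

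The second step is to invoke classical Aronson-type Gaussian estimates for $\partial_t + \tilde{\mathcal{X}}(t)$ together with the Feynman--Kac representation of $\hat\Psi^\epsilon$. Using that $\mathrm{Re}\,V_\xi^\epsilon = -\epsilon^2\xi^2/2 \le 0$ (producing quadratic damping in $\xi$) and that $\mathrm{Im}\,V_\xi^\epsilon$ is bounded uniformly in $\epsilon$, one obtains estimates of the form
\begin{align}
\bigl|\partial_x^h\partial_y^k\hat\Psi^\epsilon(\xi)\bigr|
\le C(T-t)^{-1-(h+k)/2}\exp\!\Bigl(-\tfrac{c|(x-x',y-y')|^2}{T-t}\Bigr) e^{-\epsilon^2\xi^2(T-t)/2} e^{K|\mathrm{Im}\,\xi|(T-t)},
\end{align}
with $C,c,K$ depending only on $M,N,h,k,\bar T$. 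A Paley--Wiener / contour-shift argument on the inverse Fourier transform, optimizing the displacement $\mathrm{Im}\,\xi$ against the Gaussian damping $e^{-\epsilon^2\xi^2(T-t)/2}$ and the oscillation $e^{i\xi w}$, converts this bound into the Gaussian profile $(\epsilon^2(T-t))^{-1/2}\exp(-w^2/(2\epsilon^2(T-t)))$ that matches the $v$-part of $\Gamma_{\text{heat}}^{(M,\epsilon)}$, up to an $\epsilon$-independent constant. The polynomial prefactor $(x-x')^i(y-y')^j$ is then absorbed into an extra $(T-t)^{(i+j)/2}$ factor via the routine inequality $|z|^p e^{-c|z|^2/s} \lesssim s^{p/2} e^{-c'|z|^2/s}$ applied to the $(x,y)$-Gaussian envelope, yielding \eqref{e1}.

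The principal obstacle, anticipated in Remark \ref{rmk:matt}, is that $\tilde{\mathcal{A}}^\epsilon(t)$ is only degenerately parabolic on $\mathbb{R}^3$: the smallest eigenvalue of its diffusion matrix is of order $\epsilon^2$, so a direct three-dimensional application of the Gaussian estimates from \cite{PaglPas2014} would yield constants blowing up as $\epsilon \to 0$. The decisive idea is that the block structure together with $v$-translation invariance confines all the $\epsilon$ dependence to the scalar Fourier variable $\xi$, allowing the 2D parabolic estimates to be run with $\epsilon$-free constants and isolating the $\epsilon$-sensitive work to the Fourier inversion in $v$. The delicate point inside this step is to handle the bounded $(t,x,y)$-dependent drift $\tilde a\beta(1-\beta)$ sitting in the imaginary part of $V_\xi^\epsilon$: the contour shift must be arranged so that its contribution to the $v$-profile is absorbed into the multiplicative constants without disturbing the uniformity in $\epsilon$.
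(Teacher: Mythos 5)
Your route is genuinely different from the paper's. The paper proves the lemma by the parametrix method in all three variables: it freezes the coefficients of $\Xct(t)$ at a point $(\xb,\yb)$ and drops the first-order terms in $v$, so that the frozen kernel factorizes exactly as $\Gt^\eps_{\xb,\yb}=\Gt_{\xb,\yb}\,\G_\eps$, a non-degenerate two-dimensional Gaussian times the one-dimensional $\eps$-heat kernel; the two-sided comparison \eqref{e2} with $\G^{(M,\e)}_{\text{\rm heat}}$ is then a purely two-dimensional statement with constants $M^{\pm 2}$, manifestly uniform in $\eps$, and the lemma follows by running the parametrix iteration while checking that no constant picks up an $\eps$. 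You instead Fourier-transform in $w=v-v'$ (legitimate, since no coefficient depends on $v$) and reduce to a family of two-dimensional uniformly parabolic problems with a complex potential, to be resolved by Aronson-type estimates plus a contour shift. Both arguments rest on the same two structural facts ($v$-independence of the coefficients and uniform ellipticity in $(x,y)$ alone); yours buys a cleaner separation of the $(x,y)$- and $v$-analyses at the price of having to re-synthesize the $v$-Gaussian by a saddle-point argument.

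The gap sits exactly at the step you yourself flag as delicate, and as written it does not close. To produce the profile $\exp\(-w^2/(2\eps^2(T-t))\)$ from the damping $\ee^{-\eps^2\xi_r^2(T-t)/2}$ you must shift the contour to $|\mathrm{Im}\,\xi|=|w|/(\eps^2(T-t))$. The potential term $\ii\xi\,\at(t,x,y)\beta(1-\beta)$, whose coefficient is of order one, then contributes $\ee^{K|\mathrm{Im}\,\xi|(T-t)}=\ee^{K|w|/\eps^2}$, and completing the square against the Gaussian leaves a residual factor $\ee^{K^2(T-t)/(2\eps^2)}$, which blows up as $\eps\to0$. This is not an artifact of your method but the heart of the matter: the $v$-marginal of the process is centred at $v+O(T-t)$ while $\G^{(M,\e)}_{\text{\rm heat}}$ is centred at $v'=v$ with width $\eps\sqrt{T-t}$, so any proof must explicitly recentre, or otherwise dispose of, the $O(1)$ drift $\at\,\beta(1-\beta)\p_v$ before claiming uniformity in $\eps$; asserting that its contribution ``is absorbed into the multiplicative constants'' is precisely the assertion that needs proof. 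The paper's construction avoids putting this drift into the parametrix at all --- it appears only in the iteration residual --- which is why the basic comparison \eqref{e2} is uniform in $\eps$ by inspection; your version would need an analogous device (for instance, conjugating away the drift in $v$ before transforming) to rescue the uniformity.
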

\begin{proof}
Estimate \eqref{e1}  differs slightly from the classical Gaussian estimates for parabolic equations
(cf. \cite{friedman-parabolic}; see also \cite{DiFrancescoPascucci2}, \cite{pascuccibook} for a
more recent and general presentation) because the operator $(\d_t + \Act^\eps(t))$, while
parabolic, is \emph{not uniformly parabolic} with respect to $\eps\in\,(0,1]$. Nevertheless, the
thesis can be proved by mimicking the classical argument which is based on the parametrix method
and carefully checking that the constant $\mathbf{c}_0$ is independent of $\eps$.  In particular,
the main ingredients in the parametrix construction are some uniform-in-$\eps$, Gaussian
estimates (see, for instance, Proposition 3.1 in \cite{DiFrancescoPascucci2}), which we now describe.
For any fixed $(\bar{x},\bar{y})\in\R^{2}$, we denote by $\tilde{\Xc}_{\xb,\yb}(t)$ the
operator obtained by freezing at $(\bar{x},\bar{y})$ the coefficients of $\tilde{\Xc}(t)$ and we set
\begin{align}
 \Act^\eps_{\bar{x},\bar{y}}(t)
    &:= \Xct_{\bar{x},\bar{y}}(t)+ \frac{\e^{2}}{2}\d_v^2. 
\label{eq:Afrozen}
\end{align}
Let $\Gt^\eps_{\bar{x},\bar{y}}$ and $\Gt_{\bar{x},\bar{y}}$ be the fundamental solutions
corresponding to $(\d_t+\Act_{\xb,\yb}^\eps)$ and $(\d_t+\Xct_{\xb,\yb})$ respectively. Then for
every $\bar{x},\bar{y},x,y,v,x',y',v'\in\R$, $0\le t<T\le \bar{T}$ and $\eps\in\,(0,1]$, we have
\begin{align}\label{e2}
  M^{-2}\G^{(M^{-1},\e)}_{\text{\rm heat}}(t,x,y,v;T,x',y',v')\le
  \Gt^\eps_{\bar{x},\bar{y}}(t,x,y,v;T,x',y',v')\le M^{2}\G^{(M,\e)}_{\text{\rm heat}}(t,x,y,v;T,x',y',v').
\end{align}
Estimate \eqref{e2} can be readily proved as in Proposition 3.1 in \cite{DiFrancescoPascucci2}, by
noting that $\Gt^\eps_{\bar{x},\bar{y}}=\Gt_{\bar{x},\bar{y}}\G_{\e}$ where $\G_{\e}$ is the
fundamental solution of the one-dimensional heat (parabolic) operator
$(\d_t+\frac{\e^{2}}{2}\p_{vv})$. Notice that \eqref{e2} is uniform in $\eps$ (i.e. the constants
in the estimates are independent of $\eps$).  Based on this fact, the estimate \eqref{e1}, with
$\mathbf{c}_{0}$ independent of $\eps$, follows by the parametrix method.
\end{proof}

\begin{lemma}
\label{th:error_estimates_taylor} Let Assumption \ref{assumption1} hold. {Denote by $\Gt^\eps$ the
fundamental solution in \eqref{ae1} corresponding to $(\d_t + \Act^\eps(t))$. Denote by
$\bar{\Gam}_N^\eps$ the $N$th-order approximation of $\Gt^\eps$, constructed using
$(\xb(\cdot),\yb(\cdot))=(x,y)$.} Then we have
\begin{align}\label{th:error_estim_fund_solution}
 \left|\Gt^\eps(t,x,y,v;T,x',y',v')-\bar{\Gamma}^\eps_{N}(t,x,y,v;T,x',y',v')\right|
 \leq \mathbf{c}_{1} (T-t)^{\frac{N+1}{2}}\G^{(M,\e)}_{\text{\rm heat}}(t,x,y,v;T,x',y',v'),
\end{align}
for any $x,y,v,x',y',v'\in\mathds{R}$, $0\leq t<T$ and $\eps\in\,(0,1]$, where $\mathbf{c}_1$ is a
positive constant that depends on $M,N,T$ but is independent of $\eps$.
\end{lemma}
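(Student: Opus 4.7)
The plan is to follow the parametrix-style argument used in \cite{PaglPas2014}, but to track every constant carefully in order to verify that it can be taken independent of $\eps \in (0,1]$, using Lemma \ref{t1} as the uniform-in-$\eps$ replacement of the classical Gaussian estimates. First, I would recall that, by Proposition \ref{thm:pde} applied to the operator $(\d_t+\Act^\eps(t))$ with Dirac final datum, the approximating kernel decomposes as
\begin{align}
\bar{\Gamma}^\eps_N(t,x,y,v;T,x',y',v')
    &= \sum_{n=0}^N \Gamma^\eps_n(t,x,y,v;T,x',y',v'),
\end{align}
where $\Gamma^\eps_0$ is the fundamental solution of $(\d_t + \Act^\eps_{x,y}(t))$ (coefficients frozen at the backward point $(x,y)$, as dictated by the choice $(\bar x(\cdot),\bar y(\cdot))=(x,y)$) and each $\Gamma^\eps_n$ solves a Cauchy problem with zero terminal data and source $-\sum_{k=1}^{n}\Act^\eps_{k}(t)\Gamma^\eps_{n-k}$.

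Next, I would write the remainder $R^\eps_N := \tilde{\Gamma}^\eps-\bar{\Gamma}^\eps_N$ via Duhamel's principle applied to $(\d_t+\Act^\eps(t))$: it satisfies a Cauchy problem with zero final datum and forcing
\begin{align}
F^\eps_N(t,\cdot)
    &= \bigl( \Act^\eps(t) - \Act^\eps_0(t) - \Bc^\eps_{1,N}(t)\bigr)\bar{\Gamma}^\eps_N(t,\cdot),
\end{align}
where $\Bc^\eps_{1,N}(t)=\sum_{n=1}^N \Act^\eps_n(t)$ is the truncated expansion of the non-frozen part. The key observation is that, by Taylor's theorem with integral remainder applied to the coefficients of $\Act^\eps(t)$ expanded around $(x,y)$, the forcing $F^\eps_N$ is a linear combination of terms of the form $(\xi-x)^{i}(\eta-y)^{j}$ with $i+j=N+1$ (times bounded coefficients, by Assumption \ref{assumption1}-i),ii), with bounds independent of $\eps$) multiplied by second-order derivatives of $\bar\Gamma^\eps_{N-1}$ or $\bar\Gamma^\eps_N$. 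Crucially, the $\eps$-dependence enters only through the $v$-second-derivative term of $\Xct^\eps$, which is already inside $\Act^\eps_0$ and thus does not appear in the remainder.

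Then I would represent $R^\eps_N$ via Duhamel as
\begin{align}
R^\eps_N(t,\xv;T,\xv')
    &= \int_t^T \dd s \int_{\Rb^3} \dd \wv \,
        \tilde{\Gamma}^\eps(t,\xv;s,\wv)\, F^\eps_N(s,\wv;T,\xv'),
\end{align}
and use Lemma \ref{t1} in two ways: once to bound the $(x,y)$-derivatives of $\bar{\Gamma}^\eps_n$ that appear in $F^\eps_N$ by $(T-s)^{-1}\,\Gamma^{(M,\eps)}_{\text{heat}}$-type factors (uniformly in $\eps$), and once to bound $\tilde{\Gamma}^\eps$ itself by $\mathbf{c}_0\,\Gamma^{(M,\eps)}_{\text{heat}}(t,\xv;s,\wv)$. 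The polynomial weights $(w_1-x')^{i}(w_2-y')^{j}$ with $i+j=N+1$, inserted into $\Gamma^{(M,\eps)}_{\text{heat}}$, produce an additional factor $(T-s)^{(N+1)/2}$. Applying the Chapman–Kolmogorov identity for the heat kernel (valid since it factorizes in $(x,y)$ and $v$, and both parts satisfy the semigroup property), the space integral collapses to $\Gamma^{(M,\eps)}_{\text{heat}}(t,\xv;T,\xv')$, and the time integral of $(T-s)^{(N-1)/2}$ over $[t,T]$ yields $(T-t)^{(N+1)/2}$ up to a constant depending only on $M,N,T$.

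The main obstacle, and the reason the straightforward citation of \cite{PaglPas2014} does not suffice, is precisely the uniformity in $\eps$: in the classical parabolic bounds one implicitly uses ellipticity constants that blow up as $\eps\to 0$, whereas here the heat kernel $\Gamma^{(M,\eps)}_{\text{heat}}$ itself degenerates in the $v$-direction. What makes the argument work is the two-factor structure $\tilde{\Gamma}^\eps_{\bar x,\bar y}=\tilde{\Gamma}_{\bar x,\bar y}\,\Gamma_\eps$ noted in the proof of Lemma \ref{t1}, together with the fact that the forcing $F^\eps_N$ involves differentiation only in $(x,y)$ (never in $v$), so all $\eps$-sensitive derivatives are absorbed into $\Gamma_\eps$, which we only convolve with itself. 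Verifying this decoupling rigorously, and checking that the Chapman–Kolmogorov step preserves the uniform constants, is where the bookkeeping must be done most carefully.
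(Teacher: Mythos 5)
Your overall strategy --- represent the remainder via Duhamel's principle and control it with the uniform-in-$\eps$ Gaussian estimates of Lemma \ref{t1} --- is exactly the one the paper invokes (its proof simply says to repeat, step by step, the proof of Theorem 3.10 of \cite{lorig-pagliarani-pascucci-4}, checking that $\mathbf{c}_1$ inherits the $\eps$-independence of $\mathbf{c}_0$). However, two steps in your execution do not hold as written. First, the forcing term is miscomputed: since $(\d_t+\Act^\eps_0(t))\Gamma^\eps_n=-\sum_{k=1}^n\Act^\eps_k(t)\Gamma^\eps_{n-k}$, one finds
\begin{align}
(\d_t+\Act^\eps(t))\bar{\Gamma}^\eps_N
=\sum_{k=1}^{N}\Act^\eps_k(t)\sum_{m=N-k+1}^{N}\Gamma^\eps_m
+\Bigl(\Act^\eps(t)-\textstyle\sum_{n=0}^{N}\Act^\eps_n(t)\Bigr)\bar{\Gamma}^\eps_N ,
\end{align}
so besides the Taylor remainder your forcing must contain the cross terms $\Act^\eps_k\Gamma^\eps_m$ with $k+m\ge N+1$ and $k\le N$. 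Their polynomial weight has degree only $k$, not $N+1$; the missing powers of $(T-t)$ come from the intrinsic $(T-t)^{m/2}$ gain carried by $\Gamma^\eps_m$ itself. Dropping these terms makes your Duhamel identity false.

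Second, and more seriously, the claim that ``the forcing involves differentiation only in $(x,y)$, never in $v$'' is incorrect: the coefficient of $\d_v$ in $\Vct^\eps(t)$ is $-\tfrac{\eps^2}{2}+\at(t,x,y)\b(1-\b)$, whose $(x,y)$-dependent part contributes to every $\Act^\eps_n$ with $n\ge1$ and to the Taylor remainder (it vanishes only for $\b\in\{0,1\}$). This matters because a naive bound on $\d_v$ applied to the $v$-marginal of the approximating kernels costs a factor of order $\eps^{-1}(T-t)^{-1/2}$, which is exactly the kind of loss that destroys uniformity in $\eps$; note that Lemma \ref{t1} deliberately provides uniform estimates only for $\d_x,\d_y$ derivatives and $(x,y)$-polynomial weights, with no $v$-derivatives at all. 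Since your entire mechanism for $\eps$-uniformity (``all $\eps$-sensitive derivatives are absorbed into $\Gamma_\eps$, which we only convolve with itself'') rests on the false premise that no $\d_v$ appears in the forcing, the proof has a genuine gap precisely at the point the lemma is about. Closing it requires an additional device --- e.g.\ trading $\d_v$ of the $v$-marginal for a forward derivative and exploiting the accompanying small factors $(x-\xb)^{i}(y-\yb)^{j}$, or a more careful bookkeeping in the spirit of the cited reference --- rather than the decoupling argument you give.
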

\begin{proof}
Using the uniform in $\eps$ estimate \eqref{e1} and the ellipticity of  $\Ac^\eps(t)$, we can
repeat step by step the proof of \cite[Theorem 3.10]{lorig-pagliarani-pascucci-4}. The key ingredient in the
modified proof is to verify
that, since $\mathbf{c}_{0}$ in \eqref{e1} does not depend on $\eps$, neither does
$\mathbf{c}_{1}$.
\end{proof}

We are now in a position to state our main error estimate. For any $\eps \geq 0$, let
$\tilde{u}^\eps$ be the classical bounded 
solution of Cauchy problem
\begin{align}
 \( \d_t + \Act^{\e}(t) \) \tilde{u}^\eps(t,x,y,z)
    &=  0, &
(t,x,y,z)
    &\in[0,T)\times \R^{3},\\
 \tilde{u}^\eps(T,x,y,z)
    &= \left(\ee^k-\ee^z\right)^+,  &
(x,y,z)
    &\in \R^{3}.
\end{align}
For $\e=0$ we will generally omit the superscript and simply write $\tilde{u}$ instead of
$\tilde{u}^0$. 
\begin{theorem}
\label{cor:u} Let Assumptions \ref{assumption2},  \ref{assumption1} and \ref{assumption3} hold.
Let $\bar{u}^{\e}_N$, $\e\ge 0$, denote the $N$-th order approximation of $\tilde{u}^{\e}$, which is constructed as in \eqref{eq:u0.def} with $(\xb(\cdot),\yb(\cdot))=(x,y)$ and with $\Ac(t)$
replaced by $ \Act^{\e}(t)$.
Let $\bar{u}_N := \bar{u}_N^\eps |_{\eps=0}$.
{Note that $\bar{u}_N$ coincides with the $N$-th order approximation of $u$ when $(x,y,z)\in \left(D_{\delta r}(x_{0},y_{0})\times\R\right)\cap D.$}
Then for any
$\delta\in\,(0,1)$ we have
\begin{align}
 \left| u(t,x,y,z)-\bar{u}_N(t,x,y,z) \right| \leq \mathbf{c}_{2}(T-t)^{\frac{N+2}{2}}, \qquad 0\leq t<T,\ (x,y,z)\in \left(D_{\delta r}(x_{0},y_{0})\times\R\right)\cap D.
\end{align}
The constant $\mathbf{c}_{2}$ depends only on $\delta,k,M,N$ and $T$.
\end{theorem}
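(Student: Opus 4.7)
I would follow a three-phase strategy: (i) apply the main error estimate of \cite{PaglPas2014} to the regularized operator $\d_t+\Act^\eps(t)$, checking that all constants are uniform in $\eps\in(0,1]$; (ii) pass to the limit $\eps\to 0^+$ to obtain a global bound on $\R^3$ for $|\tilde u-\bar u_N|$; (iii) localize to recover a bound on $|u-\bar u_N|$ over $\left(D_{\delta r}(x_0,y_0)\times\R\right)\cap D$ by comparing $u$ and $\tilde u$.

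\emph{Phase (i).} For each fixed $\eps\in(0,1]$ the operator $\d_t+\Act^\eps(t)$ is uniformly parabolic on $[0,T]\times\R^3$, so Theorem 3.1 of \cite{PaglPas2014} yields a bound $|\tilde u^\eps-\bar u_N^\eps|\le c_\eps(T-t)^{(N+2)/2}$. The critical point is to verify that $c_\eps$ can be taken independent of $\eps$. Using the change of variables $z=\beta x+v$ from \eqref{e10}, the difference $\tilde u^\eps-\bar u_N^\eps$ is represented as a convolution of the (bounded) put payoff against $\Gt^\eps-\bar\Gam_N^\eps$. Lemma \ref{th:error_estimates_taylor} gives exactly the required Gaussian bound on this kernel difference with constant $\mathbf{c}_1$ independent of $\eps$, and Lemma \ref{t1} supplies the derivative estimates for $\Gt^\eps$ that are needed in the Duhamel/parametrix steps of \cite{PaglPas2014}. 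Retracing that argument with these two ingredients in place of the usual Gaussian estimates produces the bound with $c_\eps$ replaced by a constant independent of $\eps$; the extra half-power of $(T-t)$ beyond the order $(N+1)/2$ already supplied by Lemma \ref{th:error_estimates_taylor} is recovered exactly as in \cite{PaglPas2014} using the smoothness and boundedness structure of the put payoff.

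\emph{Phase (ii).} The approximation $\bar u_N^\eps$ is an explicit polynomial-in-Gaussian expression whose ingredients depend continuously on $\eps$, so $\bar u_N^\eps\to\bar u_N$ as $\eps\to 0^+$. For $\tilde u^\eps\to\tilde u$, I would invoke stability of bounded classical solutions of Cauchy problems under convergence of coefficients, together with the uniform-in-$\eps$ interior estimates from Lemma \ref{t1}. Passing to the limit in the Phase (i) bound then gives $|\tilde u(t,x,y,z)-\bar u_N(t,x,y,z)|\le \mathbf{c}\,(T-t)^{(N+2)/2}$ on all of $[0,T)\times\R^3$.

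\emph{Phase (iii) and main obstacle.} On $D_r(x_0,y_0)\times\R$ one has $\Ac=\Act$ by Assumption \ref{assumption1}-ii), so both $u$ and $\tilde u$ satisfy the same backward equation there with the same terminal data; consequently $u-\tilde u$ is controlled by the probability that the $(X,Y)$-diffusion started at $(x,y)\in D_{\delta r}$ exits the larger ball $D_r$ before time $T$. Using the local uniform ellipticity of $A$ on $D_r$ together with the Gaussian estimates from Lemma \ref{t1}, this probability, and hence $|u-\tilde u|$, is bounded by $C\exp\!\bigl(-c(1-\delta)^2 r^2/(T-t)\bigr)$, which is $O\!\bigl((T-t)^{(N+2)/2}\bigr)$ uniformly on $[0,T)$. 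Finally, $\bar u_N$ coincides on $(D_{\delta r}\times\R)\cap D$ with the $N$-th order approximation of $u$, because the Taylor expansions appearing in Definition \ref{def:ub.N} are performed at $(\xb,\yb)=(x,y)\in D_{\delta r}\subset D_r$ where $\Ac=\Act$. The main obstacle is Phase (i): verifying that no step of the parametrix argument of \cite{PaglPas2014} secretly involves the parabolicity constant of $\Act^\eps$, which degenerates as $\eps\to 0$. Lemmas \ref{t1} and \ref{th:error_estimates_taylor} were set up precisely to supply the $\eps$-independent Gaussian machinery needed to close this gap, but the bookkeeping is delicate because of the singular direction $z$ noted in Remark \ref{rmk:matt}.
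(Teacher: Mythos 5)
Your proposal follows essentially the same route as the paper's proof: elliptic regularization, the uniform-in-$\eps$ Gaussian and parametrix estimates of Lemmas \ref{t1} and \ref{th:error_estimates_taylor} to obtain a global bound on $|\tilde{u}^\eps-\bar{u}^\eps_N|$ (with the extra half-power of $(T-t)$ coming from the Lipschitz regularity and boundedness of the Put payoff), a global-to-local passage, and the limit $\eps\to 0$ justified by the $\eps$-independence of the constants. The only immaterial difference is ordering: the paper localizes first, invoking Theorem 3.1 of \cite{PaglPas2014} at fixed $\eps>0$ (where $\Ac^\eps(t)$ is locally elliptic) to pass from $\tilde{u}^\eps$ to $u^\eps$, and only then sends $\eps\to 0$, whereas you send $\eps\to 0$ first and then localize via an explicit exit-time argument for the $(X,Y)$-diffusion.
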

\begin{proof} Firstly, we remark explicitly that if $(x,y)\in D_{r}(x_{0},y_{0})$ then,
for any $\e\ge 0$, $\bar{u}^{\e}_N$ coincides with the $N$-th order approximation of $u^{\e}$
because $\Ac(t)\equiv \Act(t)$ in $[0,T]\times D_{r}(x_{0},y_{0})\times \R$. Then, integrating
estimate \eqref{th:error_estim_fund_solution} against the payoff function we obtain
\begin{align}\label{eq:error_estimate3}
 \left|\tilde{u}^{\e}(t,x,y,z)-\bar{u}^{\e}_N(t,x,y,z) \right| \leq \mathbf{c}_{1}(T-t)^{\frac{N+1}{2}}, \qquad 0\leq t<T,\ (x,y,z)\in \R^{3}.
\end{align}
By exploiting the Lipschitz regularity and boundedness of the Put payoff, we have
a more refined estimate with the power $\frac{N+2}{2}$ replacing $\frac{N+1}{2}$ in the exponent
of $(T-t)$ in \eqref{eq:error_estimate3}.
{Since the operator $\Ac^\eps(t)$, $\e>0$, and the payoff function $\varphi$ satisfy the
assumptions of Theorem 3.1 in \cite{PaglPas2014}, we can pass from the global error
estimate for $\tilde{u}^{\e}$ to the local estimate for $u^\eps$}
\begin{align}\label{eq:error_estimate2}
 \left| u^{\e}(t,x,y,z)-\bar{u}^{\e}_N(t,x,y,z) \right| \leq \mathbf{c}_{2} (T-t)^{\frac{N+2}{2}}, \qquad 0\leq t<T,\ (x,y,z)\in \left(D_{\delta
 r}(x_{0},y_{0})\times\R\right)\cap D.
\end{align}
By Lemma \ref{t1}, the above estimate is uniform in $\e$ and this is sufficient to conclude the
proof.
\end{proof}

%
%

\section{Implied volatility}
\label{sec:imp.vol} In this section, we translate our price expansion for a call option with
payoff function $\phi(z) = (\ee^z - \ee^k)^+$ into an expansion in implied volatility.  To ease
notation we shall suppress much of the dependence on $(t,T,x,y,z,k)$.  However, one should keep in
mind that prices and implied volatilities do depend on these quantities, even if this is not
explicitly indicated.  We begin our analysis by recalling the definitions of the Black-Scholes
call price and implied volatility.
\begin{definition}
\label{def:uBS}
The \emph{Black-Scholes Call price} $u^{\BS}: \Rb^+ \to \Rb^+$ is given by
\begin{align}
u^{\BS}(\sig)
    &:=     \ee^z \Nc(d_{+}(\sig)) - \ee^k \Nc(d_{-}(\sig)) , &
d_{\pm}(\sig)
    &:= \frac{1}{\sig \sqrt{\tau}} \( z - k \pm \frac{\sig^2 \tau}{2}  \) , &
\tau
        &:= T-t , \label{eq:uBS}
\end{align}
where $\Nc$ is the CDF of a standard normal random variable.
\end{definition}
\begin{definition}
\label{def:imp.vol.def} For fixed $(t,T,z,k)$, the \emph{implied volatility} corresponding to a call price $u\in\,((\ee^{z}-\ee^{k})^+,\ee^z)$ is defined as the unique strictly positive real solution $\sig$ of the equation
\begin{align}
 u^{\BS}(\sig)    &= u ,   \label{eq:imp.vol.def}
\end{align}
where $u^\BS$ is given by \eqref{eq:uBS}.
\end{definition}
\begin{theorem}
\label{thm:uBS}
For a European call option with payoff function $\phi(z)=(\ee^z-\ee^k)^+$ we have
\begin{align} \label{eq:zeroIV}
u_0
    &=  u^{\BS}(\sig_0) , &
\sig_0^2
    &=  \frac{2 \beta^2}{T-t} \int_t^T \dd s \, a_{0,0}(s) .
\end{align}
\end{theorem}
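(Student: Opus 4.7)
The plan is to recognize that $u_0$ as written in \eqref{eq:u0.bs} is literally a Black--Scholes call price written in the standard log-price form, and then simply read off the total variance from the Gaussian kernel.

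First I would recall from Section \ref{sec:u0} that
\begin{align}
u_0(t,z) = \int_\Rb d\zeta \, \frac{1}{\sqrt{2\pi s^2(t,T)}} \exp\!\left(\frac{-(\zeta - m(t,T))^2}{2 s^2(t,T)}\right) \phi(\zeta),
\end{align}
with $m(t,T) = z - \beta^2 \int_t^T a_{0,0}(s)\,ds$ and $s^2(t,T) = 2\beta^2 \int_t^T a_{0,0}(s)\,ds$. The crucial algebraic observation is the martingale identity $m(t,T) = z - \tfrac{1}{2} s^2(t,T)$. Hence $u_0(t,z)$ is exactly the expectation $\Eb[(\ee^{\zeta} - \ee^{k})^+]$, where $\zeta$ is a Gaussian random variable with mean $z - \tfrac{1}{2} s^2(t,T)$ and variance $s^2(t,T)$.

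Next I would plug $\phi(\zeta) = (\ee^\zeta - \ee^k)^+$ into the integral and evaluate it via the classical Black--Scholes computation: split the integrand as $\ee^\zeta$ times the Gaussian density (which after completing the square integrates against the half-line $\{\zeta > k\}$ to give $\ee^z \Nc(d_+)$) minus $\ee^k$ times the Gaussian tail probability $\Nc(d_-)$, where
\begin{align}
d_\pm = \frac{z - k \pm \tfrac{1}{2} s^2(t,T)}{\sqrt{s^2(t,T)}}.
\end{align}
Comparing to Definition \ref{def:uBS} with $\tau = T-t$, this says $u_0 = u^\BS(\sig_0)$ provided the total variance satisfies $\sig_0^2 \tau = s^2(t,T)$, i.e.\ $\sig_0^2 = \frac{2\beta^2}{T-t}\int_t^T a_{0,0}(s)\,ds$, which is the claim.

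There is no real obstacle here; the only point to be careful about is matching the sign conventions in the drift $m(t,T)$ so that the $\ee^\zeta$-integral produces $\ee^z \Nc(d_+)$ with the correct $+\tfrac{1}{2}\sig_0^2 \tau$ in the numerator of $d_+$. This is guaranteed by the martingale condition $m = z - \tfrac{1}{2} s^2$, which is in turn a consequence of the drift choice $-\tfrac{1}{2}\beta^2\sig^2$ in the SDE \eqref{eq:L} for $Z$.
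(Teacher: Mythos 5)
Your proposal is correct and follows exactly the paper's route: the paper's proof simply states that the result ``follows directly from \eqref{eq:u0.bs}'' with the call payoff, and your write-up fills in precisely that direct computation (the martingale identity $m = z - \tfrac{1}{2}s^2$ and the standard Gaussian integration yielding $\ee^z\Nc(d_+)-\ee^k\Nc(d_-)$ with total variance $s^2(t,T)=\sig_0^2\tau$). No discrepancies.
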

\begin{proof}
The proof follows directly from \eqref{eq:u0.bs} with $\phi(z)=(\ee^z-\ee^k)^+$.
\end{proof}
\noindent
From Theorem \ref{thm:uBS} we note that the price expansion \eqref{eq:u0.def} is of the form
\begin{align}
u
    &=  u^\BS(\sig_0) + \sum_{n=1}^\infty u_n . \label{eq:u.form}
\end{align}
As shown in \cite{lorig-pagliarani-pascucci-2} and \cite{lorig-jacquier-1}, the special form \eqref{eq:u.form} lends itself to an expansion
\begin{align}
\sig
    &=  \sig_0 + \eta , &
\eta
    &=  \sum_{n=1}^\infty \sig_n , \label{eq:sigma.expand}
\end{align}
of implied volatility.  To see this, one expands $u^\BS(\sig)$ as a Taylor series about the point $\sig_0$.
{For $\eta$ small enough (i.e., within the radius of convergence of the Taylor series expansion of $u^\BS$ about the point $\sig_0$) we have}
\begin{align}
u^\BS(\sig)
    &=  u^\BS(\sig_0 + \eta) \\
    &=  u^\BS(\sig_0) + \eta    \,\d_\sig u^\BS(\sig_0) + \frac{1}{2!} \eta^2   \d_\sig^2 u^\BS(\sig_0)
            + \frac{1}{3!} \eta^3   \d_\sig^3 u^\BS(\sig_0) + \ldots . \label{eq:uBS.expand}
\end{align}
Inserting expansions \eqref{eq:u.form} and \eqref{eq:uBS.expand} into equation \eqref{eq:imp.vol.def}, one can solve iteratively for every term in the sequence $(\sig_n)_{n \geq 1}$.
We define the \emph{$n$th-order approximation of implied volatility} as
\begin{align}
\sigb_n
    &=  \sum_{k=0}^n \sig_n .
\end{align}
The first four terms in the sum, which are enough to provide an accurate approximation of implied volatility, are $\sig_0$, given by \eqref{eq:zeroIV}, and
\begin{align}
\sig_1
    &=  \frac{u_1}{\d_\sig u^\BS(\sig_0)}, &
\sig_2
    &=  \frac{u_2 - \tfrac{1}{2} \sig_1^2 \d_\sig^2 u^\BS(\sig_0)}{\d_\sig u^\BS(\sig_0)} ,
    &
\sig_3
    &=  \frac{u_3 - \(\sig_2 \sig_1 \d_\sig^2 + \tfrac{1}{3!}\sig_1^3 \d_\sig^3 \) u^\BS(\sig_0)}{\d_\sig u^\BS(\sig_0)} . \label{eq:sig.n}
\end{align}
A general expression for the $n$th-order term can be found in \cite{lorig-pagliarani-pascucci-2,lorig-jacquier-1}.
\par
As written, the expressions in \eqref{eq:sig.n} are not particularly useful.  Indeed $u^\BS(\sig_0)$ and $u_n$ are Gaussian integrals, which are not numerically intensive to compute, but do not give much explicit information about how implied volatility depends on $(t,T,x,y,z,k,\beta)$.  However, using \eqref{eq:uBS} a direct computation shows
\begin{align}
\frac{\d_\sig^2u^\BS(\sig)}{\d_\sig u^\BS(\sig)}
    &=  \frac{(k-z)^2}{\tau \sig^3}-\frac{\tau \sigma }{4} , &
\frac{\d_\sig^3u^\BS(\sig)}{\d_\sig u^\BS(\sig)}
    &=  \frac{(k-z)^4}{\tau^2 \sig^6}-\( \frac{3}{\tau \sig^4}-\frac{1}{2 \sig^2}\)(k-z)^2
            +\frac{\tau^2 \sig^2}{16} -\frac{\tau}{4} . \label{eq:d.sigma}
\end{align}
In general, every term of the form $\d_\sig^n u^\BS(\sig_0)/\d_\sig u^\BS(\sig_0)$ can be computed explicitly.  Moreover, terms of the form $u_n/\d_\sig u^\BS(\sig_0)$ can also be computed explicitly.
To see this, we note from Theorems \ref{thm:un} and \ref{thm:uBS} that
\begin{align}
u_n
    &=  \Lc_n(t,T) u_0
    =   \widetilde{\Lc}_n(t,T)u^\BS(\sig_0) ,  \label{eq:clear}
\end{align}
where
\begin{align}
\widetilde{\Lc}_n(t,T)
    &=  \sum_{k=1}^n
            \int_{t}^T \dd t_1 \int_{t_1}^T \dd t_2 \cdots \int_{t_{k-1}}^T \dd t_k
            \sum_{i \in I_{n,k}}
            \Gc_{i_1}(t,t_1) \cdots
            \Gc_{i_{k-1}}(t,t_{k-1})
                        \widetilde{\Gc}_{i_k}(t,t_k) ,  \label{eq:L.tilde} \\
\widetilde{\Gc}_n(t,t_i)
        &:= \sum_{k=0}^n \(\Mc_x(t,t_i)-\xb(t_i)\)^{n-k} \(\Mc_y(t,t_i)-\yb(t_i)\)^{k}
                a_{n-k,k}(t_i) \beta^2  (\d_z^2-\d_z) u^\BS(\sig_0) .
\end{align}
Thus, $u_n$ is a finite sum of the form
\begin{align}
u_n
    &=  \sum_m \Chi_{n,m} \d_z^m (\d_z^2 - \d_z) u^\BS(\sig_0) , \label{eq:un.sum}
\end{align}
where the coefficients $(\Chi_{n,m})$ are $(t,T,x,y)$-dependent constants, which can be computed from Theorem \ref{thm:un}.  Now, using \eqref{eq:uBS}, a direct computation shows
\begin{align}
\frac{\d_z^m (\d_z^2 - \d_z ) u^{\BS}(\sig_0)}{\d_\sig u^{\BS}(\sig_0)}
    &= \(\frac{-1}{\sqrt{2 \sig_0^2 \tau}}\)^{m} \frac{H_{n}(w)}{\tau \sig_0} , &
w
    &:= \frac{z-k-\frac{1}{2}\sig_0^2 \tau}{\sig\sqrt{2\sig_0^2 \tau}} , \label{eq:eq}
\end{align}
where $H_n(z) :=  (-1)^n \ee^{z^2} \d_z^n \ee^{-z^2}$ is the $n$-th Hermite polynomial.
Combining \eqref{eq:un.sum} with \eqref{eq:eq} we have
\begin{align}
\frac{u_n}{\d_\sig u^\BS(\sig_0)}
    &=  \sum_m \Chi_{n,m} \(\frac{-1}{\sqrt{2 \sig_0^2 \tau}}\)^{m} \frac{H_{n}(w)}{\tau \sig_0} . \label{eq:Un}
\end{align}
Finally, from \eqref{eq:d.sigma} and \eqref{eq:Un}, we see that all terms in the implied volatility expansion \eqref{eq:sig.n} are polynomials in \emph{$\log$-moneyness} $\lam:=(k-z)$.  Explicit expressions for $(\sig_n)_{n \leq 3}$ under different models will be given in Section \ref{sec:examples}.   A general expression for $(\sig_n)_{n \leq 2}$ in the time-homogeneous LSV setting is given below.  We denote by
\begin{align}
\lam
    &=  k-z , &
\tau
    &=  T-t , &
(X_t,Y_t)
    &=  (x,y) ,
\end{align}
and we choose the expansion point of our Taylor series approximation as $(\xb(\cdot),\yb(\cdot))=(x,y)$.  We have
\begin{small}
\begin{align}
\label{eq:sig0a00}
\sig_0
    &= |\beta|\sqrt{ 2 a_{0,0} } , &
\sig_1
    &=  \sig_{1,0} + \sig_{0,1} , &
\sig_2
    &=  \sig_{2,0} + \sig_{1,1} + \sig_{0,2} ,
\end{align}
where
\begin{align}
\sig_{1,0}
   &=   \frac{\tau}{4} \( (\beta -1) \sigma _0 a_{1,0} \)
                + \frac{1}{2 \sigma _0}\( {\beta  a_{1,0}} \) \lam , \\
\sig_{0,1}
   &=       \frac{\tau}{4 \sigma _0} \( {\beta ^2 a_{0,1} \left(2 c_{0,0}+\beta  f_{0,0}\right)} \)
                + \frac{1}{2 \sigma _0^3} \( {\beta ^3 a_{0,1} f_{0,0}} \) \lam, \\
\sig_{2,0}
   &=   \frac{\tau}{24 \sig_0} \( 2 \sigma_0^2 a_{2,0} - 3 \beta ^2 a_{1,0}^2 \)
                + \frac{\tau^2}{96 \beta ^2} \( {\beta ^2 (2 \beta  (2 \beta -5)+5) \sigma _0 a_{1,0}^2+4 (\beta -1)^2 \sigma _0^3 a_{2,0}} \) \\ & \qquad
                + \frac{\tau}{24 \beta  \sigma _0} \( -{(\beta -1) \left(\beta ^2 a_{1,0}^2-4 \sigma _0^2 a_{2,0}\right)} \) \lam
                +   \frac{1}{12 \sigma _0^3} \( {2 \sigma _0^2 a_{2,0}-3 \beta ^2 a_{1,0}^2} \) \lam^2 , \\
\sig_{1,1}
   &=   \frac{\tau}{12 \sigma _0^3} \( {\beta ^2 \left(a_{0,1} \left(\beta ^2 a_{1,0} f_{0,0}-2 \sigma _0^2 f_{1,0}\right)+\sigma _0^2 a_{1,1} f_{0,0}\right)} \) \\ & \qquad
                + \frac{\tau^2}{48 \sigma _0} \( {a_{0,1} \left(\beta ^2 a_{1,0} \left(2 (\beta -1) c_{0,0}-\beta  f_{0,0}\right)+2 (\beta -1) \sigma _0^2 \left(2 c_{1,0}+\beta  f_{1,0}\right)\right)+2 (\beta -1) \sigma _0^2 a_{1,1} \left(2 c_{0,0}+\beta  f_{0,0}\right)} \) \\ & \qquad
                + \frac{\tau}{24 \sigma _0^3} \( {\beta  \left(a_{0,1} \left(5 \beta ^2 a_{1,0} \left((1-2 \beta ) f_{0,0}-2 c_{0,0}\right)+2 \sigma _0^2 \left(2 c_{1,0}+(2 \beta -1) f_{1,0}\right)\right)+2 \sigma _0^2 a_{1,1} \left(2 c_{0,0}+(2 \beta -1) f_{0,0}\right)\right)} \) \lam \\[-1.5em] & \qquad
                +   \frac{1}{6 \sigma _0^5} \( {\beta ^2 \left(a_{0,1} \left(\sigma _0^2 f_{1,0}-5 \beta ^2 a_{1,0} f_{0,0}\right)+\sigma _0^2 a_{1,1} f_{0,0}\right)} \) \lam^2 , \\
\sig_{0,2}
   &=   \frac{\tau}{24 \sigma _0^5} \( {12 \beta ^2 \sigma _0^4 a_{0,2} b_{0,0}-4 \beta ^4 \sigma _0^2 \left(2 a_{0,1}^2 b_{0,0}+a_{0,1} f_{0,0} f_{0,1}+a_{0,2} f_{0,0}^2\right)+9 \beta ^6 a_{0,1}^2 f_{0,0}^2} \) \\ & \qquad
                + \frac{\tau^2}{24 \sigma _0^3} \( {\beta^2 \left(\sigma _0^2 \left(-2 \beta ^2 a_{0,1}^2 b_{0,0}+a_{0,1} \left(2 c_{0,0}+\beta  f_{0,0}\right) \left(2 c_{0,1}+\beta  f_{0,1}\right)+a_{0,2} \left(2 c_{0,0}+\beta  f_{0,0}\right){}^2\right)-3 \beta ^2 a_{0,1}^2 c_{0,0} \left(c_{0,0}+\beta  f_{0,0}\right)\right)} \) \\[-1.5em] & \qquad
                + \frac{\tau}{24 \sigma _0^5} \( {\beta ^3 \left(-9 \beta ^2 a_{0,1}^2 f_{0,0} \left(2 c_{0,0}+\beta  f_{0,0}\right)+4 \sigma _0^2 a_{0,2} f_{0,0} \left(2 c_{0,0}+\beta  f_{0,0}\right)+4 \sigma _0^2 a_{0,1} \left(f_{0,1} \left(c_{0,0}+\beta  f_{0,0}\right)+c_{0,1} f_{0,0}\right)\right)} \) \lam \\[-1.5em] & \qquad
                +   \frac{1}{12 \sigma _0^7} \( {\beta ^4 \left(2 \sigma _0^2 \left(2 a_{0,1}^2 b_{0,0}+a_{0,1} f_{0,0} f_{0,1}+a_{0,2} f_{0,0}^2\right)-9 \beta ^2 a_{0,1}^2 f_{0,0}^2\right)} \) \lam^2 .
\end{align}
\end{small}

\subsection{Comparison to other implied volatility expansions}
\label{sec:compare}
As previously mentioned, when $\beta = 1$, options written on the $\log$ LETF $Z$ are equivalent to options written the $\log$ ETF $X$.  In this special case, the implied volatility expansion discussed in this manuscript reduces to the implied volatility expansion developed in \cite{lorig-pagliarani-pascucci-2}.  If one additionally chooses $(\xb,\yb) = (x,y)$, then the implied volatility approximation given in \cite{lorig-pagliarani-pascucci-2} is equivalent to the implied volatility expansion given in \cite{bompisgobet2013}.  However, the expansion presented here and in \cite{lorig-pagliarani-pascucci-2} is derived using PDE methods, whereas the expansion presented in \cite{bompisgobet2013} is developed using tools from Malliavin calculus.  As of yet, the implied volatility approximation of \cite{bompisgobet2013} have not been extended to options on LETFs.
\par
We note that, for options written on the ETF $X$, extensive comparisons to other implied volatility expansions have been carried out in \cite{lorig-pagliarani-pascucci-2}.  In particular, for the Heston model, the approximation method presented here is compared to the approximation method in \cite{forde-jacquier-lee}, for CEV, it is compared to the approximation method of \cite{hagan-woodward}, and for SABR it is compared to the approximation of \cite{sabr}.  However, neither \cite{forde-jacquier-lee}, \cite{hagan-woodward} nor \cite{sabr} develop approximations for implied volatilities written on the LETF $Z$, as we do here.
\par
Two other methods one might conceivably use to compute approximate options prices and implied volatilities on LETFs are the \emph{heat kernel} and \emph{large deviations} methods, which are discussed, for example, in \cite{lorig-forde-1,laborderebook,Gatheral2012}.  Generally speaking, these methods all rely on computing geodesic distances on a Riemannian manifold whose metric is the inverse of the covariance matrix of the underlying diffusion.  It is not clear how one would compute large deviation estimates and geodesic distances for the metric associated with the process $(X,Y,Z)$ since the diffusion matrix is singular (see Remark \ref{rmk:elliptic}).

\subsection{Implied volatility and $\log$-moneyness scaling}
\label{sec:scaling}
Let us continue to work in the time-homogeneous setting.
Let $\sig_Z(\tau,\lam)$ be the implied volatility of a call written on the LETF $Z$ with time to maturity $\tau$ and $\log$-moneyness $\lam=(k-z)$ and let $\sig_X(\tau,\lam)$ be the implied volatility of a call written on the ETF $X$ time to maturity $\tau$ and $\log$-moneyness $\lam=(k-x)$.  The expressions above provide an explicit approximation for $\sig_Z(\tau,\lam)$ and $\sig_X(\tau,\lam)$ in a general time-homogeneous LSV setting (for $\sig_X(\tau,\lam)$, simply set $\beta=1$).  These expressions show the highly non-trivial dependence of the implied volatility on the leverage ratio $\beta$, and  are  useful for the purposes of calibration.   The  implied volatility surfaces
$(\tau,\lam) \mapsto \sig_Z(\tau,\lam)$
and
$(\tau,\lam) \mapsto \sig_X(\tau,\lam)$ can potentially behave very differently.  Nevertheless, for price comparison across leverage ratios,  it would  practical  to relate them, albeit  heuristically or approximately.
To this end, we now introduce some intuitive scalings.  Examining the lowest-order terms $\sig_0$ and $\sig_1$ we observe
\begin{align}
\text{LETF}:&&
\sig_Z
    &\approx
    |\beta|\sqrt{2a_{0,0}}
            + |\beta| \( \frac{a_{1,0}}{2 \sqrt{2a_{0,0}}}+
            \frac{a_{0,1} f_{0,0}}{2 (2a_{0,0})^{3/2}} \) \frac{\lam}{\beta} + \Oc(\tau) ,
            \label{eq:letf} \\
\text{ETF}:&&
\sig_X
    &\approx
    \sqrt{2a_{0,0}}
            + \( \frac{ a_{1,0}}{2 \sqrt{2a_{0,0}}}+
            \frac{ a_{0,1} f_{0,0}}{2 (2a_{0,0})^{3/2}} \) \lam + \Oc(\tau) . \label{eq:etf}
\end{align}
Comparing $\sig_Z$ with $\sig_X$, we see two effects from the leverage ratio $\beta$.  First, the vertical axis of $\sig_Z$ is scaled by a factor of $|\beta|$.  Second, the horizontal axis is scaled by a factor of $1/\beta$.  In particular, this means that if $\beta<0$ the slopes of $\sig_X$ and $\sig_Z$ will have opposite signs.
For small $\tau$ the contribution of the $\Oc(\tau)$ terms in the expansion will be insignificant.  In light of the above observations, it is natural to introduce $\sig_X^{(\beta)}$ and $\sig_Z^{(1/\beta)}$, the \emph{scaled implied volatilities}, which we define as
\begin{align}
\sig_X^{(\beta)}(\tau,\lam)
    &:= |\beta|\sig_X(\tau,\lam/\beta) , &
\sig_Z^{(1/\beta)}(\tau,\lam)
  &:= \frac{1}{|\beta|}\sig_Z(\tau,\beta \, \lam) . \label{eq:scaling}
\end{align}
These definitions offer two ways to link the implied volatilities surfaces $\sig_X$ and $\sig_Z$.  Viewed one way, the ETF implied volatility $\sig_X(\tau,\lam)$ should roughly coincide with the LETF implied volatility $\frac{1}{|\beta|}\sig_Z(\tau,\beta \lam)$.  Conversely, the LETF implied volatility $\sig_Z(\tau,\lam)$ should be close to the ETF implied volatility $|\beta|\sig_X(\tau,\lam/\beta)$.  In other words, from \eqref{eq:letf}, \eqref{eq:etf} and \eqref{eq:scaling}, we see that for small $\tau$
\begin{align}
\sig_Z(\tau,\lam)
    &\approx \sig_X^{(\beta)}(\tau,\lam), &
\sig_X(\tau,\lam)
    &\approx \sig_Z^{(1/\beta)}(\tau,\lam). \label{eq:sigX.sigZ}
\end{align}
In Figure \ref{fig:scaledIV}, using empirical options data from the S\&P500-based ETF and LETFs, we plot $\sig_Z$ and $\sig_Z^{(1/\beta)}$, the unscaled and scaled implied volatilities, respectively.  The figure demonstrates the pronounced effect  of the scaling argument. Prior to scaling (left panel), the    implied volatilities of the LETFs, SSO ($\beta = +2$) and SDS ($\beta = -2$), have much higher values than those of  the unleveraged ETF SPY $(\beta=+1)$. Moreover, the SDS implied volatility is increasing in log-moneyness. After scaling the LETF implied volatilities  according to  \eqref{eq:scaling} (right panel), they are  brought very close to the ETF implied volatility and they are now all   downward sloping. In Section \ref{sec:examples}, we will compute explicit approximations for $\sig_X(\tau,\lam)$ and $\sig_Z^{(1/\beta)}(\tau,\lam)$ for three well-known models: CEV, Heston and SABR.  As we shall see, although these three models induce distinct implied volatility surfaces, for small $\tau$ the role of $\beta$ in relating $\sig_X$ to $\sig_Z$ will be captured by \eqref{eq:sigX.sigZ}.
\par
We emphasize, however, that the scaling alone is not sufficient to capture the complexity of the LETF implied volatility surface.  Indeed, as $\tau$ increases, we expect $\sig_Z^{(1/\beta)}$ to diverge from $\sig_X$.  This discrepancy is due to the integrated variance contribution to the terminal value of $Z$, as can be seen from \eqref{eq:ZinX}.  Thus, for longer maturities, an accurate approximation of the LETF implied volatility surface must include higher terms in $\tau$.  From the general implied volatility expression, we can see that the role of $\beta$ in the $\Oc(\tau)$ terms is complicated and does not lend itself to a simple scaling argument.  For this reason the full implied volatility expansion  -- not just the scaling argument -- is important.

\begin{figure}
\centering
\includegraphics[width=.475\textwidth]{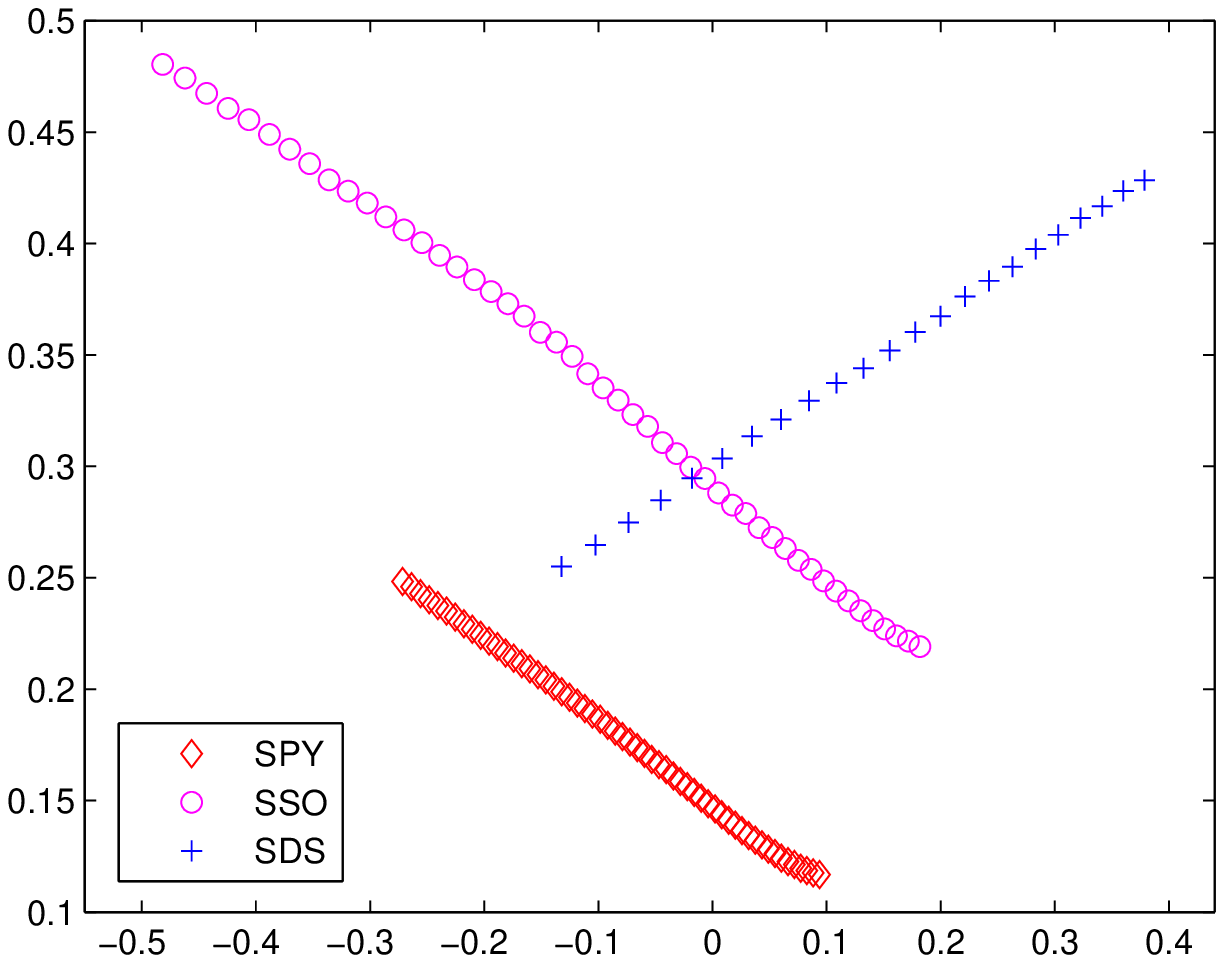}
 \includegraphics[width=.475\textwidth ]{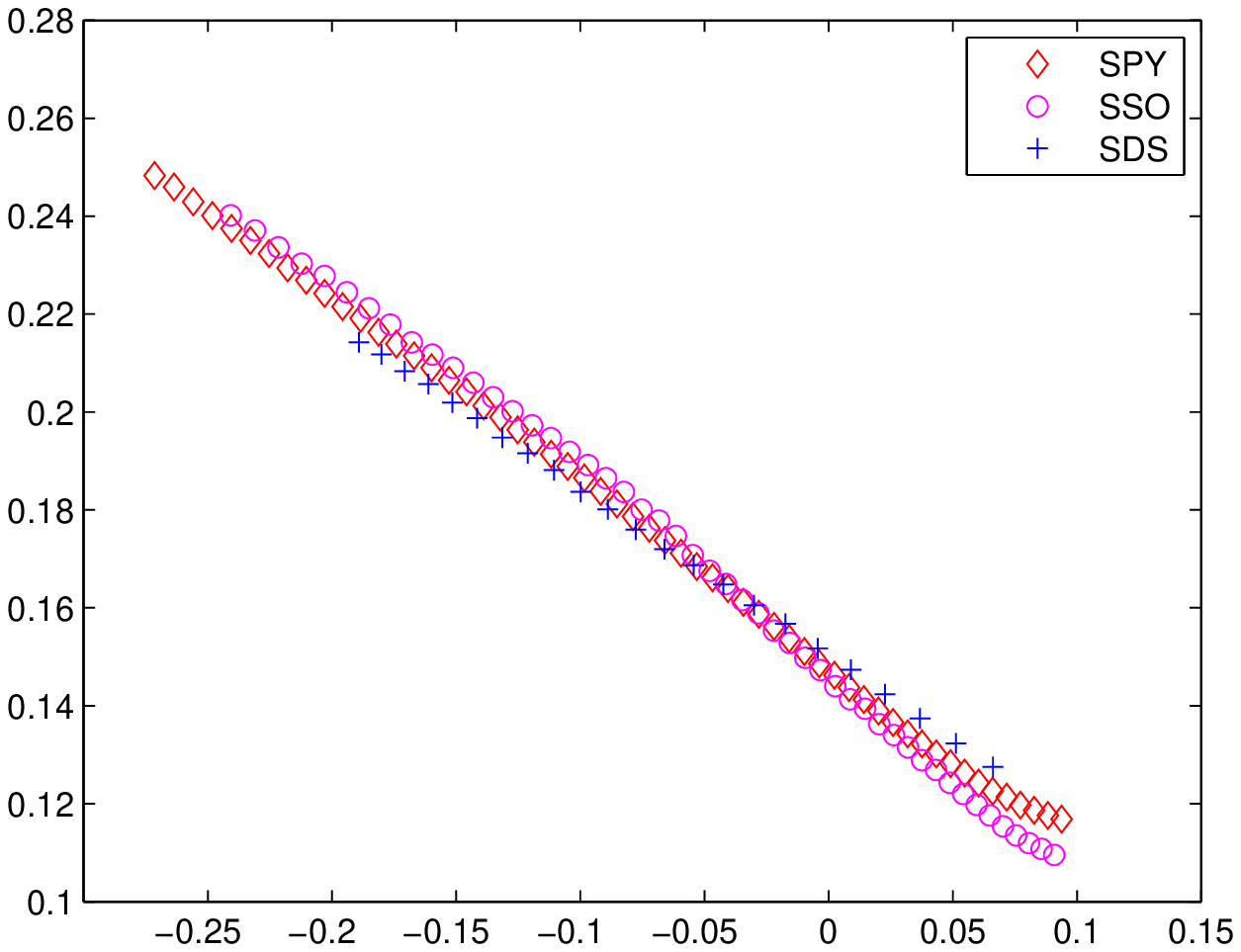}
\caption{\small{Left: {Empirical implied volatilities $\sig_Z(\tau,\lam)$ plotted as a function of log-moneyness $\lam$ for SPY (red diamonds, $\beta=+1$), SSO (purple circles, $\beta=+2$), and SDS (blue crosses, $\beta=-2$) on August 15, 2013 with $\tau=155$ days to maturity.  Note that the implied volatility of SDS is increasing in the LETF log-moneyness. Right: Using the same data, the scaled LETF implied volatilities $\sig_Z^{(1/\beta)}(\tau,\lam)$ nearly coincide.}}}
\label{fig:scaledIV}
 \end{figure}

\begin{remark} A recent paper by \cite{leungsircarLETF} postulates an alternative implied volatility scaling based on stochastic arguments.  Given the terminal ETF value $X_{\tau} = k$, they compute the expected future $\log$-moneyness $Z_\tau-z$
\begin{align}
\Eb_{x,y,z} [ Z_T - z | X_T = k ]
    &=  \beta ( k - x ) - \frac{1}{2}\beta ( \beta - 1 )
            \int_0^\tau \Eb_{x,y,z}[ \sig^2(s,X_s,Y_s) | X_\tau = k ] \, \dd s , \label{eq:lm.scaling}
\end{align}
where $\Eb_{x,y,z}[\cdot]=\Eb[\cdot|X_0=x,Y_0=y,Z_0=z]$. They also note, from the ETF and LETF SDEs that the volatility of $Z$ is $|\beta|$ times the volatility of $X$.  Using the above as heuristic, the authors propose to scale implied volatilities as follows
\begin{align}
\sig_Z(\tau,\lam)
    &=  |\beta|\sig_X(\tau,\beta \lam - \frac{1}{2}\beta(\beta-1)I(\tau)) , \\
I(\tau)
    &=  \int_0^\tau \Eb_{x,y,z}[ \sig^2(s,X_s,Y_s) | X_\tau = k ] \, \dd s .
\end{align}
In \cite{leungsircarLETF}, the value of $I(\tau)$ is estimated using an average from observed implied volatility.  In contrast, the scaling proposed in \eqref{eq:scaling} does not attempt to account for the integral in \eqref{eq:lm.scaling}.  Nevertheless, the effect of the integrated variance is captured by $\Oc(\tau)$ terms in the general implied volatility expansion.
\end{remark}

%
%

\section{Examples}
\label{sec:examples}
In this Section, we provide explicit expressions for implied volatilities under three different model dynamics: CEV, Heston and SABR.
Special attention will be paid to the role of $\beta$, the leverage ratio.  In the examples below, we fix $(\xb(\cdot),\yb(\cdot))=(X_0,Y_0)$ and we evaluate implied volatilities at time $t=0$ and maturing at time $T=\tau$.
\par
We note that, although {Theorem} \ref{cor:u} establishes the order of accuracy of our pricing approximation as $\tau \to 0$, our numerical tests indicate that the implied volatility expansion gives an accurate approximation of $\sig_Z^{(1/\beta)}$ for maturities of multiple years.  Nevertheless, options on LETFs only currently trade with maturities of less than 1.25 years\footnote{Delayed quotes for ETF and LETF options of all traded maturities are available on the CBOE and Yahoo Finance websites.}.\,  \cite{leungsircarLETF} have plotted the empirical implied volatilities  for four  S\&P500 based LETF options ($\beta=\pm 2,\pm 3$),  all with maturities of less than a year. Hence, in the numerical examples below, we focus on these maturities.


\subsection{CEV}
\label{sec:cev}
In the Constant Elasticity of Variance (CEV) local volatility model of \cite{CoxCEV}, the dynamics of the underlying $S$ are given by
\begin{align}
\dd S_t
    &=  \del S_t^{\gamma-1} S_t \dd W_t^x , &
S_0
    &> 0,
\end{align}
where, to preserve the martingale property of the process $S$ (cf. \cite{heston2007}), the
parameter $\gamma$ is assumed to be less than or equal to $1$. The dynamics of $(X,Z)=(\log S,
\log L)$ are
\begin{align}
\dd X_t
    &=  -\frac{1}{2} \del^2 \ee^{2(\gamma-1)X_t} \dd t + \del \, \ee^{(\gamma-1)X_t} \dd W_t^x , &
X_0
    &=  x := \log S_0 . \\
\dd Z_t
    &=  -\frac{1}{2} \beta^2 \del^2 \ee^{2(\gamma-1)X_t} \dd t + \beta \del \, \ee^{(\gamma-1)X_t} \dd W_t^x , &
Z_0
    &=  z := \log L_0 .
\end{align}
The generator of $(X,Z)$ is given by
\begin{align}
\Ac
    &=  \frac{1}{2} \del^2 \ee^{2(\gamma-1)x} \(
                ( \d_x^2 - \d_x ) + \beta^2 ( \d_z^2 - \d_z ) + 2 \beta \d_x \d_z \) .
\end{align}
Thus, from \eqref{eq:A}, we identify
\begin{align} \label{CEVatof}
a(x,y)
    &=  \frac{1}{2} \del^2 \ee^{2(\gamma-1)x} , &
b(x,y)
    &=  0 , &
c(x,y)
    &=  0 , &
f(x,y)
    &=  0 .
\end{align}
Using equations \eqref{eq:sig.n}, \eqref{eq:d.sigma} and \eqref{eq:Un} we compute
\begin{align}
\sig_0
    &=  |\beta|\sqrt{\ee^{2 x (\gamma-1 )} \del^2},\label{eq:cevsig0} \\
\sig_{1}
   &=   \tau \( \frac{(\beta -1) (\gamma -1) \sigma _0^3}{4 \beta ^2} \)
                + \( \frac{(\gamma -1) \sigma _0}{2 \beta } \) (k-z) , \\
\sig_{2}
   &=   \tau \( \frac{(\gamma -1)^2 \sigma _0^3}{24 \beta ^2} \)
                + \tau^2 \( \frac{(2 \beta  (6 \beta -13)+13) (\gamma -1)^2 \sigma _0^5}{96 \beta ^4} \) \\ & \qquad
                + \tau \( \frac{7 (\beta -1) (\gamma -1)^2 \sigma _0^3}{24 \beta ^3} \) (k-z)
                +   \( \frac{(\gamma -1)^2 \sigma _0}{12 \beta ^2} \) (k-z)^2 , \\
\sig_{3}
   &=  \tau^2 \( \frac{5 (\beta -1) (\gamma -1)^3 \sigma _0^5}{32 \beta ^4} \)
                + \tau^3 \( \frac{(\beta -1) \left(26 \beta ^2-70 \beta +35\right) (\gamma -1)^3 \sigma _0^7}{384 \beta ^6} \) \\ & \qquad
                + \tau \( \frac{(\gamma -1)^3 \sigma _0^3}{16 \beta ^3} \) (k-z)
                +   \tau^2 \( \frac{5 (2 \beta  (4 \beta -9)+9) (\gamma -1)^3 \sigma _0^5}{192 \beta ^5} \) (k-z) \\ & \qquad
                +   \tau \( \frac{7 (\beta -1) (\gamma -1)^3 \sigma _0^3}{48 \beta ^4} \) (k-z)^2 .
\end{align}
We observe that the factor $(\gamma - 1)$ appears in every term of  these expressions. In particular, when $\gamma=1$, $\sig_0 = |\beta|\delta$ and $\sig_1=\sig_2=\sig_3= 0$. The higher order terms also vanish since $a(x,y)=  \frac{1}{2} \del^2$ in this case (see \eqref{CEVatof}). Hence, just as in the Black-Scholes case, the implied volatility expansion becomes flat, as expected.

In Figure \ref{fig:cev.iv} we plot our third-order approximation of the scaled implied volatility $\sig_Z^{(1/\beta)}(\tau,\lam)$ in the CEV model with leverages $\beta=\{+2,-2\}$ and with maturities {$\tau=\{0.25,0.5,1\}$} years.  For comparison, we also plot the exact scaled implied volatility $\sig_Z^{(1/\beta)}(\tau,\lam)$ and the exact implied volatility of the ETF $\sig_X(\tau,\lam)$.
The exact scaled implied volatility $\sig_Z^{(1/\beta)}$ of the LETF is computed by obtaining call prices by Monte Carlo simulation and then by inverting the Black-Scholes formula numerically.
The exact implied volatility $\sig_X(\tau,\lam)$ of the ETF is computed using the exact call price formula, available in \cite{CoxCEV}, and then inverting the Black-Scholes formula numerically.


\subsection{Heston}
\label{sec:heston}
In the Heston model, due to \cite{heston1993}, the dynamics of the underlying $S$ are given by
\begin{align}
\dd S_t
    &=  \sqrt{V_t} S_t \dd W_t^x , &
S_0
    &>  0 , \\
\dd V_t
    &=  \kappa (\theta - V_t) \dd t + \del \sqrt{V_t} \dd W_t^y , &
V_0
    &> 0 , \\
\dd\<W^x,W^y\>_t
    &=  \rho \, \dd t .
\end{align}
In $\log$ notation $(X,Y,Z) := (\log S, \log V, \log L)$ we have the following dynamics
\begin{align}
\begin{aligned}
\dd X_t
    &=  -\frac{1}{2} \ee^{Y_t} \dd t + \ee^{\tfrac{1}{2}Y_t} \dd W_t^x , &
X_0
    &=  x := \log S_0 , \\
\dd Y_t
    &=  \( (\kappa \theta -\tfrac{1}{2}\del^2) \ee^{-Y_t} - \kappa \) \dd t + \del \, \ee^{-\tfrac{1}{2}Y_t} \dd W_t^y , &
Y_0
    &=  y := \log V_0 , \\
\dd Z_t
    &=  - \beta^2 \frac{1}{2} \ee^{Y_t} \dd t + \beta \ee^{\tfrac{1}{2}Y_t} \dd W_t^x , &
Z_0
    &=  z := \log L_0 , \\
\dd\<W^x,W^y\>_t
    &=  \rho \, \dd t .
\end{aligned} \label{eq:model.Heston}
\end{align}
The generator of $(X,Y,Z)$ is given by
\begin{align}
\Ac
    &=  \frac{1}{2} \ee^{y} \( (\d_x^2 - \d_x) + \beta^2 (\d_z^2 - \d_z) + 2 \beta \d_x \d_z \) \\ &\qquad
                + \( (\kappa \theta -\tfrac{1}{2}\del^2) \ee^{-y} - \kappa \) \d_y
        + \frac{1}{2} \del^2 \ee^{-y} \d_y^2 + \rho \, \del \( \d_x \d_y + \beta \d_x \d_z \) .
\end{align}
Thus, from \eqref{eq:A}, we identify
\begin{align}
a(x,y)
    &=  \frac{1}{2} \ee^{y} , &
b(x,y)
    &=  \frac{1}{2} \del^2 \ee^{-y} , &
c(x,y)
    &=  \( (\kappa \theta -\tfrac{1}{2}\del^2) \ee^{-y} - \kappa \) , &
f(x,y)
    &=  \rho \, \del .
\end{align}
Using equations \eqref{eq:sig.n}, \eqref{eq:d.sigma} and \eqref{eq:Un} we obtain
\begin{align}
\sig_0
    &=  |\beta|\sqrt{\ee^y}, \label{eq:hestonsig0} \\
\sig_{1}
   &=   \frac{\tau}{8 \sigma _0} \( {\sigma _0^2 (\beta  \delta  \rho -2 \kappa )-\beta ^2 \left(\delta ^2-2 \theta  \kappa \right)} \)
                + \frac{1}{4 \sigma _0} \( {\beta  \delta  \rho } \) (k-z) , \\
\sig_{2}
   &=   \frac{\tau}{96 \sigma _0} \( {\beta ^2 \delta ^2 \left(\rho ^2+8\right)} \) \\ & \qquad
                + \frac{\tau^2}{384 \sigma _0^3} \( {-3 \beta ^4 \left(\delta ^2-2 \theta  \kappa \right)^2-2 \beta ^2 \sigma _0^2 \left(\delta ^2-2 \theta  \kappa \right) (\beta  \delta  \rho -2 \kappa )+4 \sigma _0^4 \left(\beta  \delta  \left(\beta  \delta  \left(2 \rho ^2-1\right)-5 \kappa  \rho \right)+5 \kappa ^2\right)} \) \\[-1.5em] & \qquad
                + \frac{\tau}{96 \sigma _0^3} \( {\beta  \delta  \rho  \left(5 \beta ^2 \left(\delta ^2-2 \theta  \kappa \right)+\sigma _0^2 (2 \kappa -\beta  \delta  \rho )\right)} \) (k-z)
                +   \frac{1}{48 \sigma _0^3} \( {\beta ^2 \delta ^2 \left(2-5 \rho ^2\right)} \) (k-z)^2 , \\
\sig_{3}
   &=  \frac{\tau^2}{768 \sigma _0^3} \( {\beta ^2 \delta ^2 \left(\beta ^2 \left(5 \rho ^2+4\right) \left(\delta ^2-2 \theta  \kappa \right)+3 \rho ^2 \sigma _0^2 (\beta  \delta  \rho -2 \kappa )\right)} \) \\ & \qquad
                + \frac{\tau^3}{3072 \sigma _0^5} \( {-3 \beta ^6 \left(\delta ^2-2 \theta  \kappa \right)^3+\beta ^4 \sigma _0^2 \left(\delta ^2-2 \theta  \kappa \right)^2 (\beta  \delta  \rho -2 \kappa )+4 \beta ^2 \kappa  \sigma _0^4 \left(\delta ^2-2 \theta  \kappa \right) (\beta  \delta  \rho -\kappa )} \) \\ & \qquad
                + \frac{\tau^3}{3072 \sigma _0^5} \( {2 \sigma _0^6 (\beta  \delta  \rho -2 \kappa ) \left(\beta  \delta  \left(\beta  \delta  \left(5 \rho ^2-6\right)-6 \kappa  \rho \right)+6 \kappa ^2\right)} \) \\ & \qquad
                + \frac{\tau}{384 \sigma _0^3} \( -{\beta ^3 \delta ^3 \rho  \left(9 \rho ^2+8\right)} \) (k-z) \\ & \qquad
                +   \frac{\tau^2}{1536 \sigma _0^5} \( {\beta  \delta  \rho  \left(21 \beta ^4 \left(\delta ^2-2 \theta  \kappa \right)^2-10 \beta ^2 \sigma _0^2 \left(\delta ^2-2 \theta  \kappa \right) (\beta  \delta  \rho -2 \kappa )\right)} \) (k-z) \\ & \qquad
                +   \frac{\tau^2}{1536 \sigma _0^5} \( {\beta  \delta  \rho  \left(4 \sigma _0^4 \left(\beta  \delta  \left(\beta  \left(\delta -2 \delta  \rho ^2\right)+3 \kappa  \rho \right)-3 \kappa ^2\right)\right)} \) (k-z) \\ & \qquad
                +   \frac{\tau}{384 \sigma _0^5} \( -{\beta ^2 \delta ^2 \left(\beta ^2 \left(23 \rho ^2-8\right) \left(\delta ^2-2 \theta  \kappa \right)+\left(7 \rho ^2-2\right) \sigma _0^2 (2 \kappa -\beta  \delta  \rho )\right)} \) (k-z)^2 \\ & \qquad
                +   \frac{1}{96 \sigma _0^5} \( {\beta ^3 \delta ^3 \rho  \left(8 \rho ^2-5\right)} \) (k-z)^3 .  \label{eq:hestonsig2}
\end{align}
For longer maturities, the accuracy of the implied volatility expansion can be improved by choosing a time-dependent expansion point for the $Y$ process: $\yb(t) = \Eb_y Y_t$.  In this case, the formulas for $\sig_0$, $\sig_1$, $\sig_2$ and $\sig_3$ remain explicit.  However, as the expressions are quite long, we omit them.
\par
\cite{haughLETF} noticed from the SDEs that when $X$ has Heston dynamics with parameters ($\kappa$, $\theta$, $\del$, $\rho$, $y$), then $Z$ has Heston dynamics with parameters
\begin{align} \label{hestonz}
(\kappa_Z, \theta_Z, \del_Z, \rho_Z, y_Z)
    &=  (\kappa, \beta^2 \theta, |\beta|\del, \text{sign}(\beta) \rho, y + \log \beta^2) .
\end{align}
The characteristic function of $X_\tau$ is computed explicitly in \citet*{heston1993} and \cite*{Bakshi1997}
\begin{align}
\eta_X(\tau,x,y,\xi)
    &:= \log \Eb[ \ee^{ \ii \xi X_\tau} | X_0 = x, Y_0 = y ]
    =  {\ii \xi x + C(\tau,\xi) + D(\tau,\xi) \ee^y} , \label{eq:heston.char} \\
C(\tau,\xi)
    &=  \frac{\kappa \theta}{ \del^2} \( (\kappa - \rho \delta  \ii \xi + d(\xi) ) \tau
            -2 \log \[ \frac{1-f(\xi) \ee^{d(\xi)\tau}}{1-f(\xi)}\]\) , \\
D(\tau,\xi)
    &=  \frac{\kappa - \rho \del \ii \xi + d(\xi)}{\del^2} \frac{1-\ee^{d(\xi)\tau}}{1-f(\xi) \ee^{d(\xi)\tau}} , \\
f(\xi)
    &=  \frac{\kappa - \rho \del \ii \xi + d(\xi)}{\kappa - \rho \del \ii \xi - d(\xi)} , \\
d(\xi)
    &=  \sqrt{ \del^2 (\xi^2 + \ii \xi) + (\kappa - \rho \ii \xi \del)^2} .
\end{align}
Since $Z$ also has Heston dynamics, the characteristic function of $Z_\tau$ follows directly
\begin{align}
\eta_Z(\tau,z,y,\xi)
    &:= \log \Eb[ \ee^{ \ii \xi Z_\tau} | Y_0 = y , Z_0 = z]
    =       \eta_X(\tau,z,y,\xi) &
    &\text{with}&
(\kappa, \theta, \del, \rho,y)
    &\to    (\kappa_Z, \theta_Z, \del_Z, \rho_Z, y_Z) .
\end{align}
The price of a European call option with payoff $\phi(z)=(\ee^z-\ee^k)^+$ can then be computed using standard Fourier methods
\begin{align}
u^\text{Hes}(\tau,z,y)
    &=  \frac{1}{2\pi} \int_\Rb \dd \xi_r \, \ee^{\eta_Z(\tau,z,y,\xi)} \phih(\xi) , &
\phih(\xi)
    &=  \frac{-\ee^{k-\ii k \xi}}{ \ii \xi + \xi^2 } , &
\xi
    &=  \xi_r + \ii \xi_i , &
\xi_i
    &<  -1 . \label{eq:u.Heston}
\end{align}
Note, since the call option payoff $\phi(z)=(\ee^z -\ee^k)^+$ is not in $L^1(\Rb)$, its Fourier
transform $\hh(\xi)$ must be computed in a generalized sense by fixing an imaginary component of the Fourier variable $\xi_i < -1$. Using \eqref{eq:u.Heston} the exact implied volatility $\sig$ can be computed to solving \eqref{eq:imp.vol.def} numerically.
\par
{Moreover, it is worth noting that relationship \eqref{hestonz} can be inferred from our implied volatility expressions. Indeed, the dependence on $\beta$ in expansions \eqref{eq:hestonsig0}-\eqref{eq:hestonsig2}  is present \textit{only} in  the terms  $\beta^2 \theta, |\beta|\del, \text{sign}(\beta) \rho, y + \log \beta^2$ of the coefficients.  For instance, we can write the zeroth-order term   $\sig_0 = \sqrt{e^{ y + \log \beta^2}} = \sqrt{e^{y_Z}}$, and  the   coefficient of $(k-z)$ in $\sig_1$   is  $\beta \delta \rho/4 \sig_0 = |\beta| \delta \text{sign}(\beta)\rho/4 \sig_0 = \delta_Z \rho_Z/4\sig_0$, as per the notations in  \eqref{hestonz}. Similar verification procedures for other terms confirm relationship \eqref{hestonz}. }
\par
In Figure \ref{fig:heston.iv} we plot our third-order approximation of the scaled implied volatility $\sig_Z^{(1/\beta)}(\tau,\lam)$ in the Heston model with leverages $\beta=\{+2,-2\}$ and with maturities {$\tau=\{0.25,0.5,1\}$} years.
{For the longest maturity $\tau=1$, we use the implied volatility expansion corresponding to $\yb(t) = \Eb_y Y(t)$.}
For comparison, we also plot the exact scaled implied volatility $\sig_Z^{(1/\beta)}(\tau,\lam)$ and the exact implied volatility of the ETF $\sig_X(\tau,\lam)$.
The exact scaled implied volatility $\sig_Z^{(1/\beta)}$ of the LETF is computed by obtaining call prices from \eqref{eq:u.Heston} and then by inverting the Black-Scholes formula numerically.
The exact implied volatility $\sig_X(\tau,\lam)$ of the ETF is computed in the same manner.


\subsection{SABR}
\label{sec:sabr}
The SABR model of \citet*{sabr} is a local-stochastic volatility model in which the risk-neutral dynamics of $S$ are given by
\begin{align}
\dd S_t
    &=  V_t S_t^{\gamma-1} S_t \dd W_t^x , &
S_0
    &> 0, \\
\dd V_t
    &=  \del V_t \dd W_t^y , &
V_0
    &> 0, \\
\dd\<W^x,W^z\>_t
    &=  \rho \, \dd t .
\end{align}
In $\log$ notation $(X,Y,Z) := (\log S, \log V, \log L)$ we have, we have the following
dynamics:
\begin{align}
\begin{aligned}
\dd X_t
    &=  -\frac{1}{2} \ee^{2Y_t + 2(\gamma-1)X_t} \dd t + \ee^{Y_t + (\gamma-1)X_t} \dd W_t^x , &
X_0
    &=  x := \log S_0 , \\
\dd Y_t
    &=  -\frac{1}{2} \del^2 \dd t + \del \, \dd W_t^y , &
Y_0
    &=  y := \log V_0 , \\
\dd Z_t
    &=  -\frac{1}{2} \beta^2 \ee^{2 Y_t + 2(\gamma-1)X_t} \dd t + \beta \ee^{Y_t + (\gamma-1)X_t} \dd W_t^x , &
Z_0
    &=  z := \log L_0 , \\
\dd\<W^x,W^y\>_t
    &=  \rho \, \dd t .
\end{aligned} \label{eq:model.SABR}
\end{align}
The generator of $(X,Y,Z)$ is given by
\begin{align}
\Ac
    &=  \frac{1}{2} \ee^{2y + 2(\gamma-1)x}
                \( (\d_x^2 - \d_x) + \beta^2(\d_z^2 - \d_z) + 2 \beta \d_x \d_y \) \\ & \qquad
         - \frac{1}{2} \del^2 \d_y + \frac{1}{2} \del^2 \d_y^2
                    + \rho \, \del \, \ee^{y + (\gamma-1)x} ( \d_x \d_y + \beta \d_y \d_z ).
\end{align}
Thus, using \eqref{eq:A}, we identify
\begin{align}
a(x,y)
    &= \frac{1}{2} \ee^{2y + 2(\gamma-1)x}, &
b(x,y)
    &=  \frac{1}{2} \del^2 , &
c(x,y)
    &=  - \frac{1}{2} \del^2 , &
f(x,y)
    &=  \rho \, \del \, \ee^{y + (\gamma-1)x} . &
\end{align}
Using equations \eqref{eq:sig.n}, \eqref{eq:d.sigma} and \eqref{eq:Un} we compute
\begin{align}
\sig_0
    &=  |\beta|\sqrt{\ee^{2 y+2 x (-1+\gamma )} }, &
\sig_1
    &=  \sig_{1,0} + \sig_{0,1} , &
\sig_2
    &=  \sig_{2,0} + \sig_{1,1} + \sig_{0,2} ,
\end{align}
where
\begin{align}
\sig_{1,0}
   &=   \tau \( \frac{(\beta -1) (\gamma -1) \sigma _0^3}{4 \beta ^2} \)
                + \( \frac{(\gamma -1) \sigma _0}{2 \beta } \) (k-z) , \\
\sig_{0,1}
   &=       \tau \( -\frac{1}{4} \delta  \sigma _0 \left(\delta -\rho  \sigma _0 \text{sgn}(\beta )\right) \)
                + \( \frac{1}{2} \delta  \rho  \text{sgn}(\beta ) \) (k-z), \\
\sig_{2,0}
   &=   \tau \( \frac{(\gamma -1)^2 \sigma _0^3}{24 \beta ^2} \)
                + \tau^2 \( \frac{(2 \beta  (6 \beta -13)+13) (\gamma -1)^2 \sigma _0^5}{96 \beta ^4} \) \\ & \qquad
                + \tau \( \frac{7 (\beta -1) (\gamma -1)^2 \sigma _0^3}{24 \beta ^3} \) (k-z)
                +   \( \frac{(\gamma -1)^2 \sigma _0}{12 \beta ^2} \) (k-z)^2 , \\
\sig_{1,1}
   &=   \tau \( \frac{(\gamma -1) \delta  \rho  \sigma _0^2}{12 \left| \beta \right| } \)
                + \tau^2 \( \frac{(\gamma -1) \delta  \sigma _0^3 \left(\beta  (6 \beta -7) \rho  \sigma _0-5 (\beta -1) \delta  \left| \beta \right| \right)}{48 \left| \beta \right| ^3} \) \\ & \qquad
                + \tau \( \frac{(\gamma -1) \delta  \sigma _0 \left(\delta  \left| \beta \right| +(2 \beta -1) \rho  \sigma _0\right)}{24 \beta  \left| \beta \right| } \) (k-z)
                +   \( -\frac{(\gamma -1) \delta  \rho }{3 \left| \beta \right| } \) (k-z)^2 , \\
\sig_{0,2}
   &=   \tau \( \frac{1}{24} \delta ^2 \left(8-3 \rho ^2\right) \sigma _0 \)
                + \tau^2 \( \frac{1}{96} \delta ^2 \sigma _0 \left(5 \delta ^2+4 \left(3 \rho ^2-1\right) \sigma _0^2-\frac{14 \delta  \rho  \sigma _0}{\text{sgn}(\beta )}\right) \) \\ & \qquad
                + \tau \( -\frac{\delta ^2 \rho  \left(\delta -3 \rho  \sigma _0 \text{sgn}(\beta )\right)}{24 \text{sgn}(\beta )} \) (k-z)
                +   \( \frac{\delta ^2 \left(2-3 \rho ^2\right)}{12 \sigma _0} \) (k-z)^2 .
\end{align}
We omit the expression for $\sig_3$ for the sake of brevity.  However, an explicit computations shows that $\sig_3$ contains terms of orders $\tau^2$, $\tau^3$, $\tau (k-z)$, $\tau^2 (k-z)$, $\tau (k-z)^2$ and $(k-z)^3$.
In Figure \ref{fig:sabr.iv} we plot our third-order approximation of the scaled implied volatility $\sig_Z^{(1/\beta)}(\tau,\lam)$ in the SABR model with leverages $\beta=\{+2,-2\}$ and with maturities {$\tau=\{0.25,0.5,1\}$} years.  For comparison, we also plot the exact scaled implied volatility $\sig_Z^{(1/\beta)}(\tau,\lam)$ and the exact implied volatility of the ETF $\sig_X(\tau,\lam)$.
The exact scaled implied volatility $\sig_Z^{(1/\beta)}$ of the LETF is computed by obtaining call prices by Monte Carlo simulation and then by inverting the Black-Scholes formula numerically.
The exact implied volatility $\sig_X(\tau,\lam)$ of the ETF is computed using the exact call price formula, available in \cite{sabr-exact} for the special case $\rho=0$, and then inverting the Black-Scholes formula numerically.

%
%

\section{Conclusion}
\label{sec:conclusion}
In this article, starting from ETF dynamics in a general time-inhomogeneous LSV setting, we derive approximate European-style option prices written on the associated LETFs.  The option price approximation requires only a normal CDF to compute.  Therefore, computational times for prices are comparable to Black-Scholes.  We also establish rigorous error bounds for our pricing approximation.  These error bounds are established through a regularization procedure, which allows us to overcome challenges that arise when dealing with a generator $\Ac(t)$ that is \emph{not} elliptic.
\par
Additionally, we derive an implied volatility expansion that is fully explicit -- polynomial in $\log$-moneyness $\lam=(k-z)$ and (for time-homogeneous models) polynomial in time to maturity.  To aid in the analysis of the implied volatility surface, we discuss some natural scalings of implied volatility.  Furthermore,  we test our implied volatility expansion on three well-known LSV models (CEV, Heston and SABR) and find that the expansion provides an excellent approximation of the true implied volatility.
\par
The markets for  leveraged   ETFs and their options continue to grow,  not only in equities, but also in other sectors such as commodity, fixed-income, and currency. The question of consistent pricing, as we have investigated for equity LETF options in terms implied volatility, is also relevant to LETF options in other sectors.  Naturally, the  valuation of  LETF options will   depend on the dynamics of the LETFs and  underlying price process, which may vary significantly across sectors (see e.g. \cite{GuoLeung14,LeungWard} for commodity LETFs).  Nevertheless, it is both practically and mathematically interesting  to adapt the techniques in the current paper to  investigate the implied volatilities  across leverage ratios  with different underlyings.    From a market stability perspective, it is important for both  investors and regulators  to understand the risks and dependence structure  among ETFs and the price relationships of their traded derivatives.\\

\begin{small}
\noindent \textbf{Acknowledgments}~~~The authors are grateful to  Peter Carr, Emanuel Derman, Martin Haugh, Sebastian Jaimungal, and Ronnie Sircar for their helpful discussions, and we thank  the seminar participants at Morgan Stanley, Columbia University, and Fields Institute for their comments.  The authors also wish to express their gratitude to two anonymous referees {and one anonymous associate editor}, whose comments helped improve the mathematical rigor and clarity of this paper.
\end{small}

%
%

\begin{small}
\bibliographystyle{chicago}
\bibliography{BibTeX-Master-3.0X}
\end{small}

%
%

%
\begin{figure}
\centering
\begin{tabular}{  c  c }
$\tau=0.25$ & $\tau=0.25$ \\
\includegraphics[width=.475\textwidth,height=0.19\textheight]{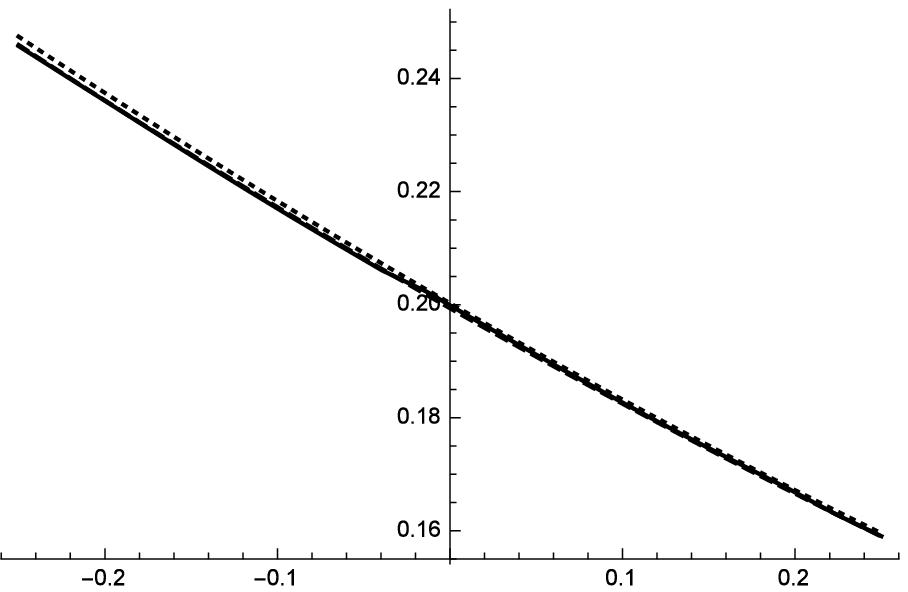}&
\includegraphics[width=.475\textwidth,height=0.19\textheight]{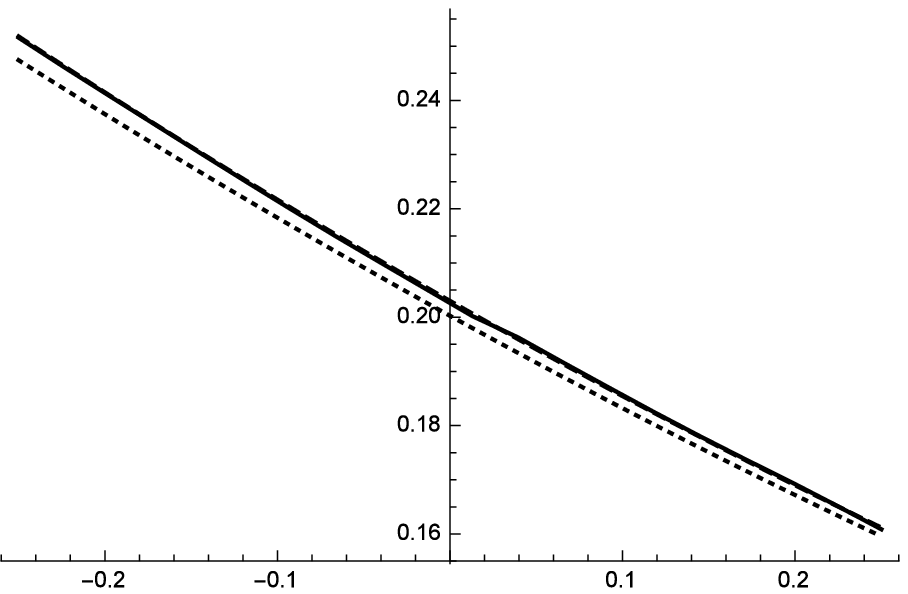}\\
$\tau=0.5$ & $\tau=0.5$ \\
\includegraphics[width=.475\textwidth,height=0.19\textheight]{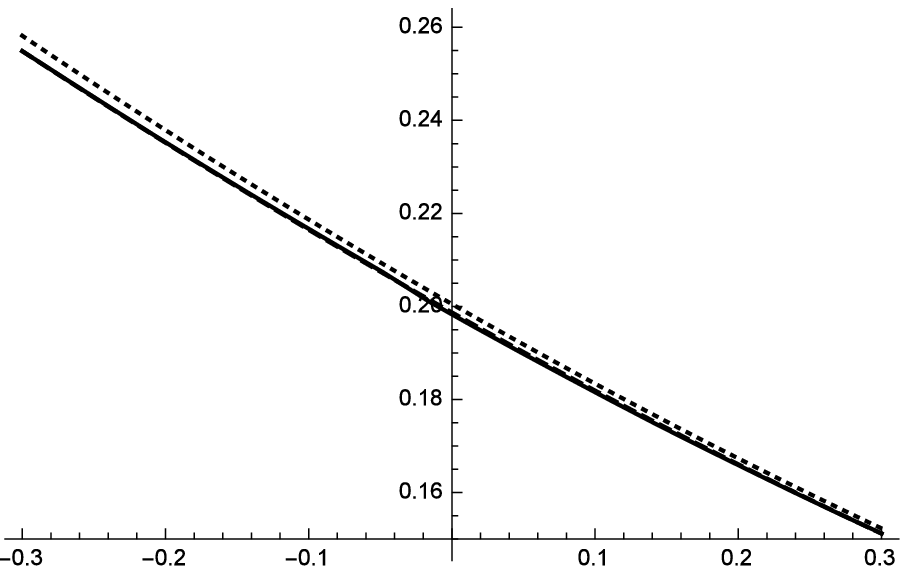}&
\includegraphics[width=.475\textwidth,height=0.19\textheight]{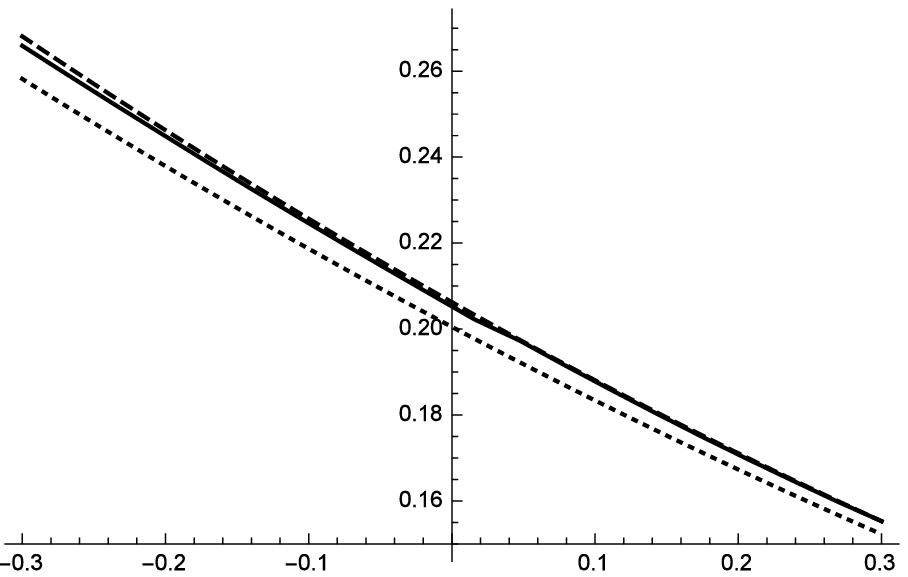}\\
$\tau=1.0$ & $\tau=1.0$ \\
\includegraphics[width=.475\textwidth,height=0.19\textheight]{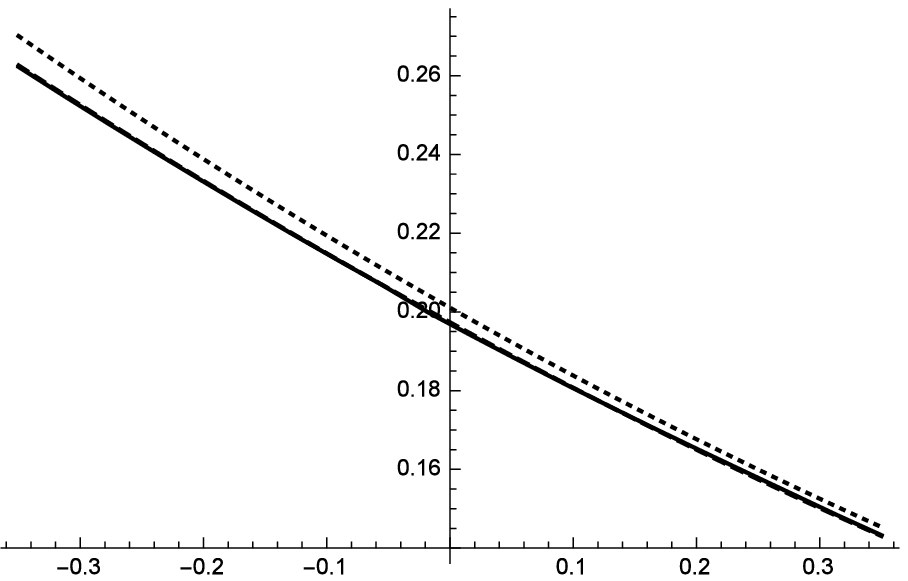}&
\includegraphics[width=.475\textwidth,height=0.19\textheight]{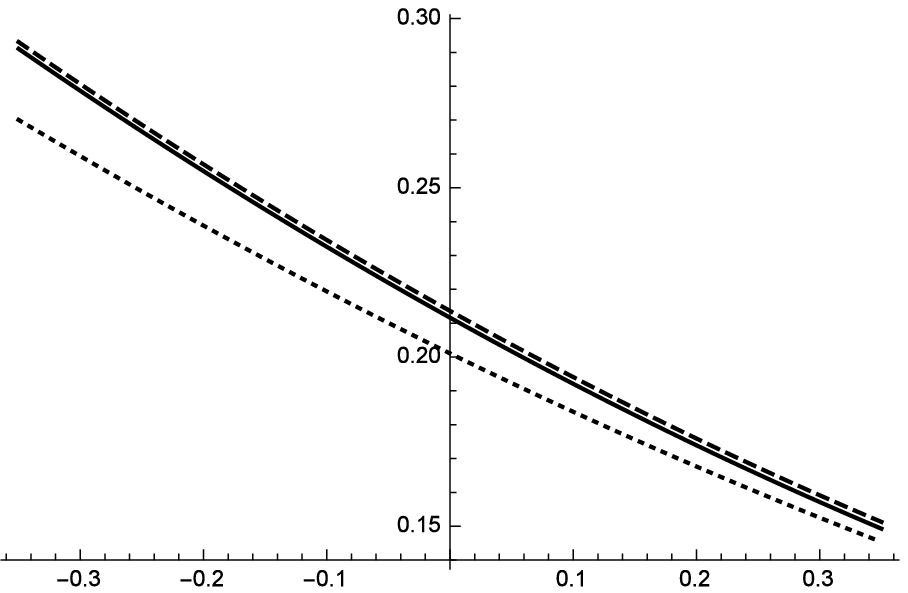}\\
$\beta= + 2$ & $\beta=-2$ \\
\end{tabular}
\caption{\small{
Exact (solid -- computed by Monte Carlo) and approximate (dashed) scaled implied volatility $\sig_Z^{(1/\beta)}(\tau,\lam)$ under CEV model dynamics plotted as a function of log-moneyness  $\lam$.  For comparison, we also plot the exact implied volatility of the CEV model $\sig_Z^{(1)}(\tau,\lam)=\sig_X(\tau,\lam)$ (dotted).  Parameters: $\del=0.2$, $\gam=-0.75$, $x=0$.
For each leverage ratio ($\beta = \pm 2$), as $\tau$ increases, the solid and dotted lines diverge, while the dashed and solid lines {remain so close they are nearly indistinguishable}.}
}
\label{fig:cev.iv}
\end{figure}

%
\clearpage
\begin{figure}
\centering
\begin{tabular}{  c  c }
$\tau=0.25$ & $\tau=0.25$ \\
\includegraphics[width=.475\textwidth,height=0.19\textheight]{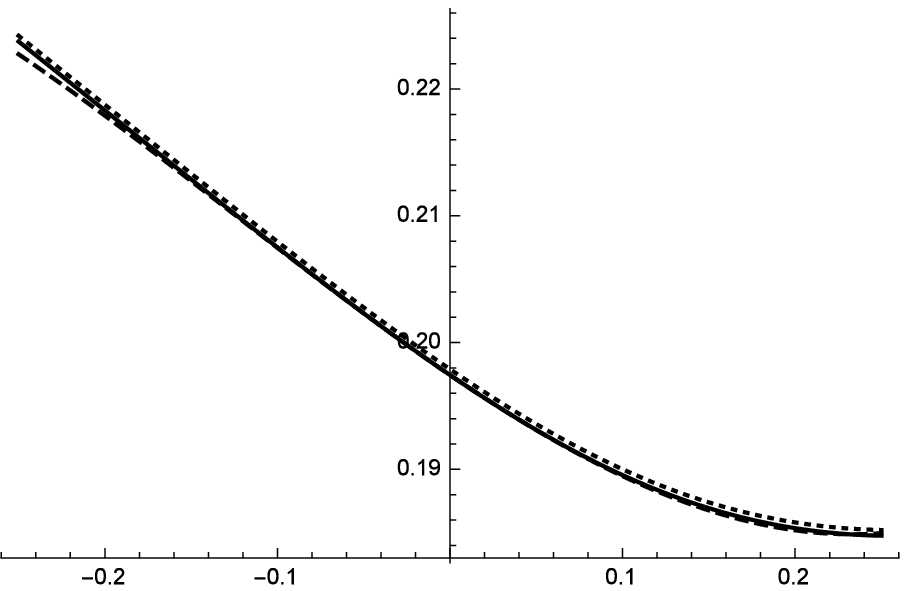}&
\includegraphics[width=.475\textwidth,height=0.19\textheight]{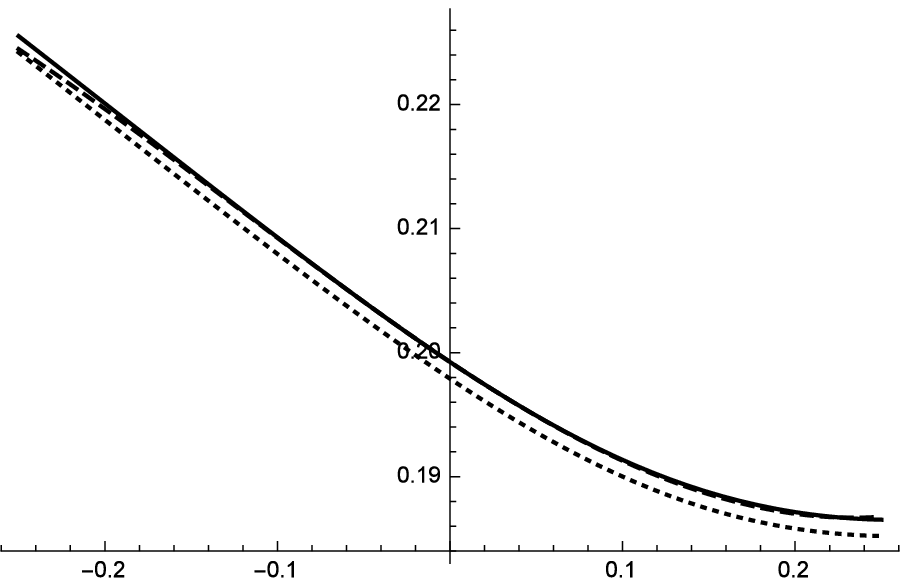}\\
$\tau=0.5$ & $\tau=0.5$ \\
\includegraphics[width=.475\textwidth,height=0.19\textheight]{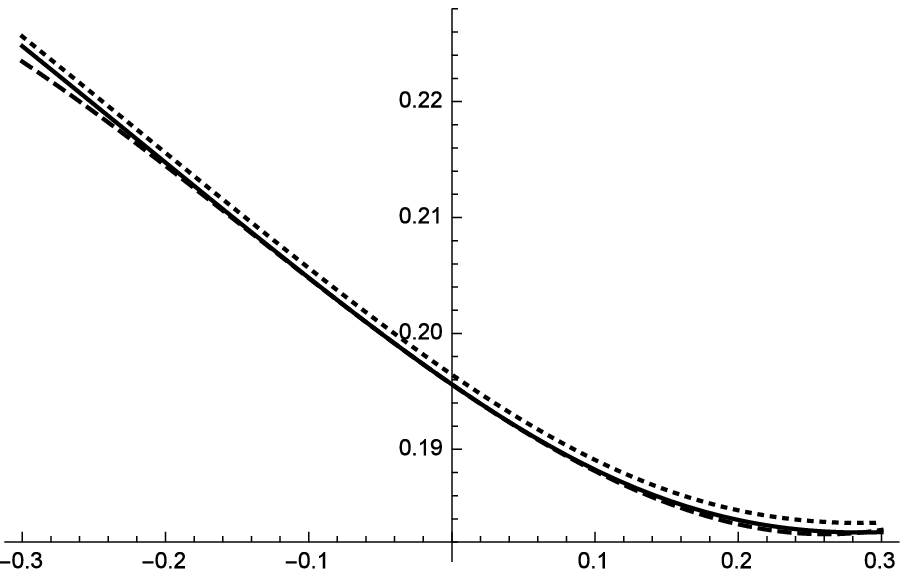}&
\includegraphics[width=.475\textwidth,height=0.19\textheight]{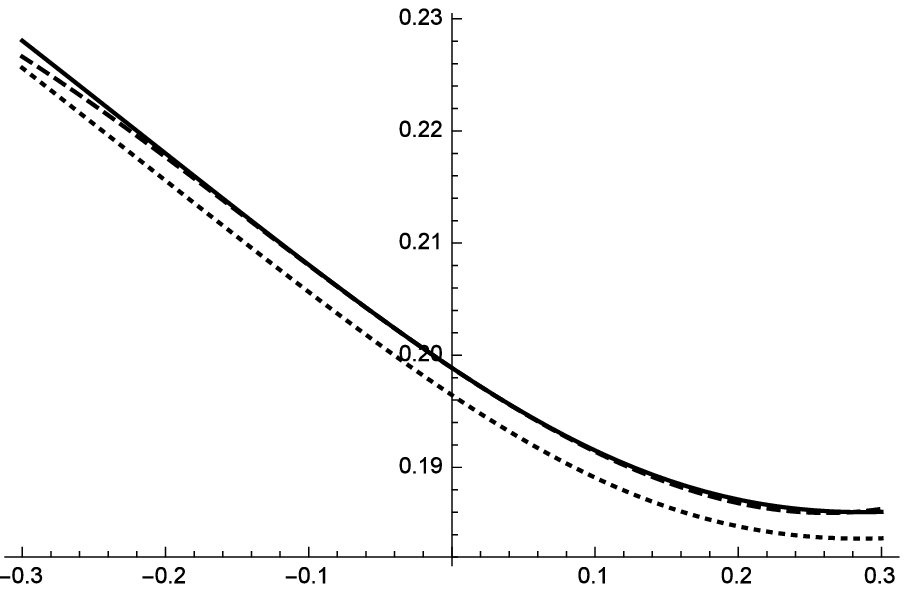}\\
$\tau=1.0$ & $\tau=1.0$ \\
\includegraphics[width=.475\textwidth,height=0.19\textheight]{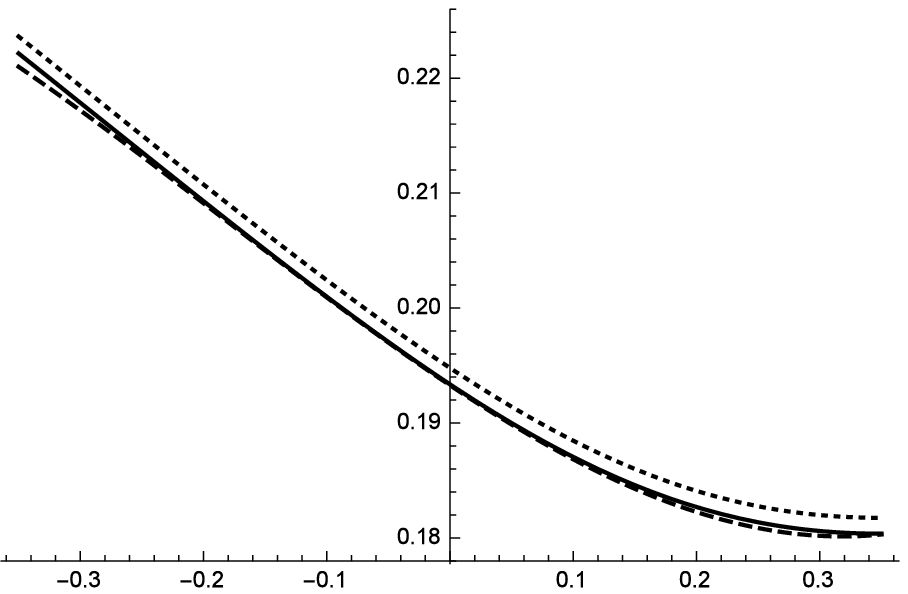}&
\includegraphics[width=.475\textwidth,height=0.19\textheight]{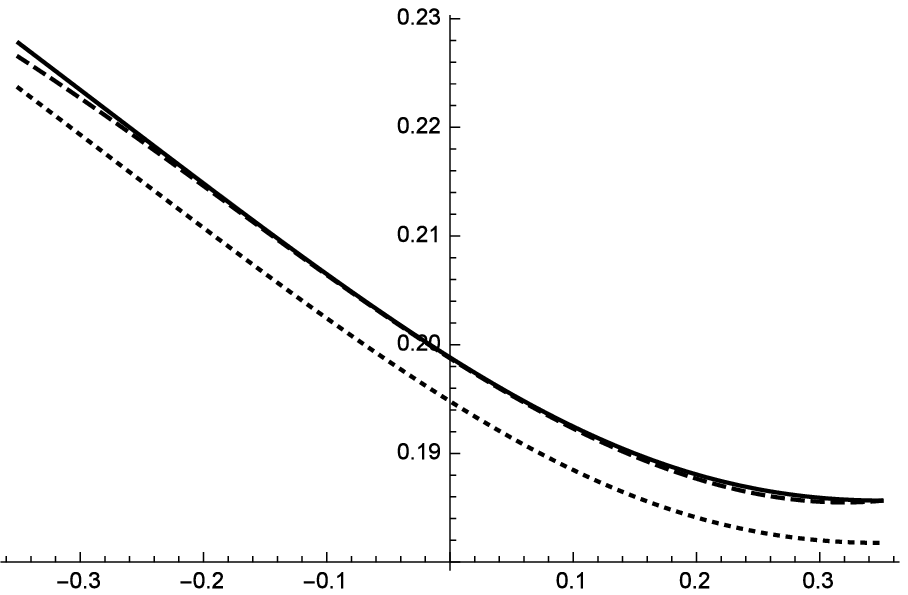}\\
$\beta= + 2$ & $\beta=-2$ \\
\end{tabular}
\caption{ \small{Exact (solid -- computed by Fourier inversion) and approximate (dashed) scaled implied
volatility $\sig_Z^{(1/\beta)}(\tau,\lam)$ under Heston model dynamics plotted as a function of log-moneyness
$\lam$.  For comparison, we also plot the exact implied volatility of the Heston model
$\sig_Z^{(1)}(\tau,\lam)=\sig_X(\tau,\lam)$ (dotted).  Parameters: $\kappa = 1.15$, $\theta=0.04$,
$\del=0.2$, $\rho=-0.4$, $y=\log \theta$.
For $\beta = \pm 2$, as $\tau$ increases, the dotted lines start  to deviate from the solid lines, but  the dashed and solids lines {remain so close they are nearly indistinguishable}.}} \label{fig:heston.iv}
\end{figure}

%
\clearpage
\begin{figure}
\centering
\begin{tabular}{  c  c }
$\tau=0.25$ & $\tau=0.25$ \\
\includegraphics[width=.475\textwidth,height=0.19\textheight]{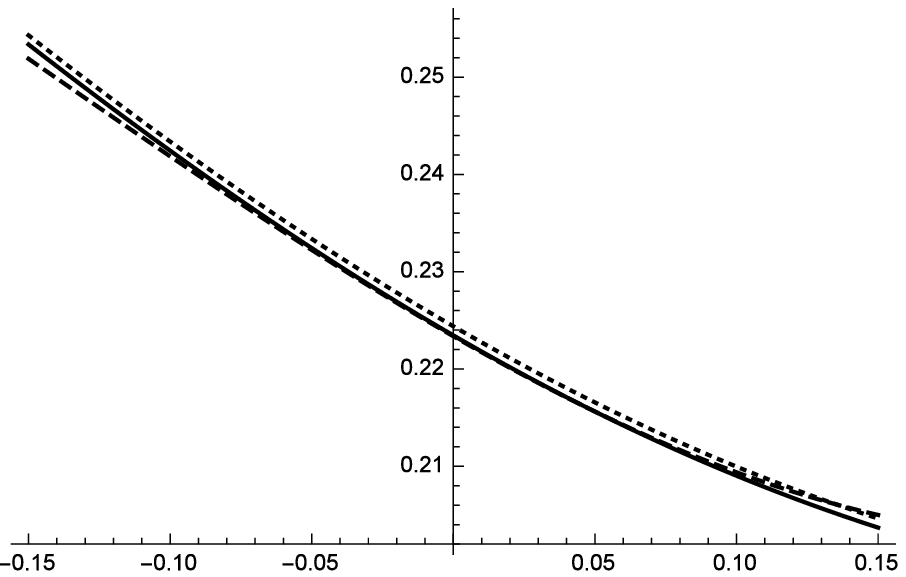}&
\includegraphics[width=.475\textwidth,height=0.19\textheight]{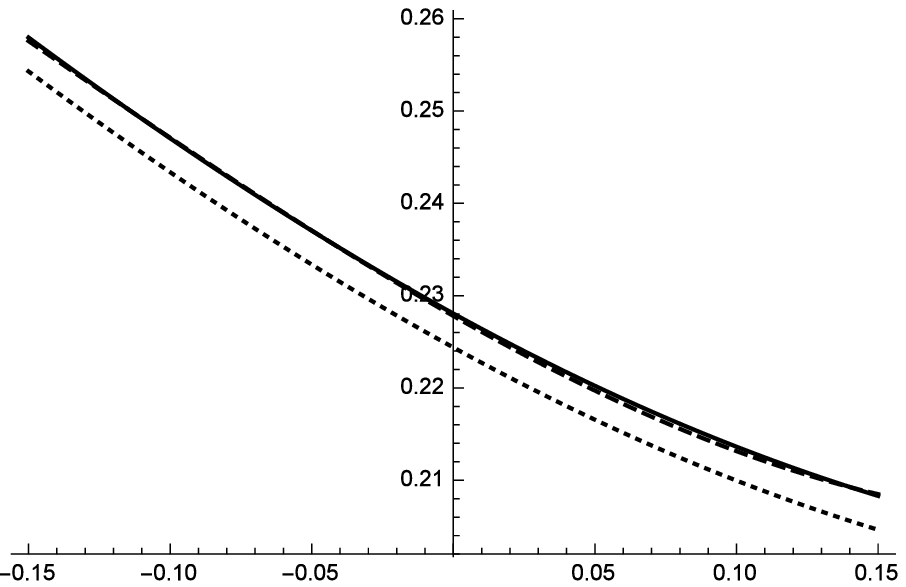}\\
$\tau=0.5$ & $\tau=0.5$ \\
\includegraphics[width=.475\textwidth,height=0.19\textheight]{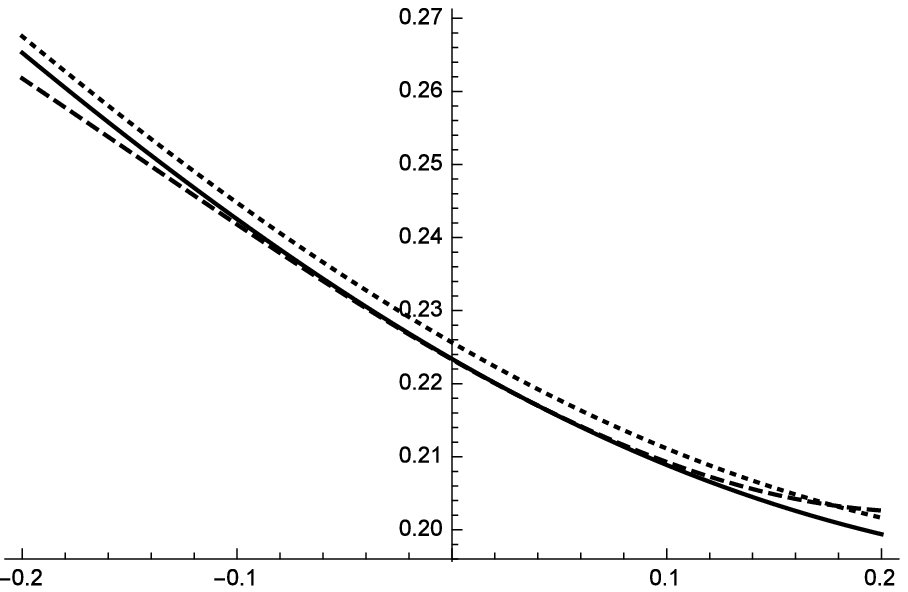}&
\includegraphics[width=.475\textwidth,height=0.19\textheight]{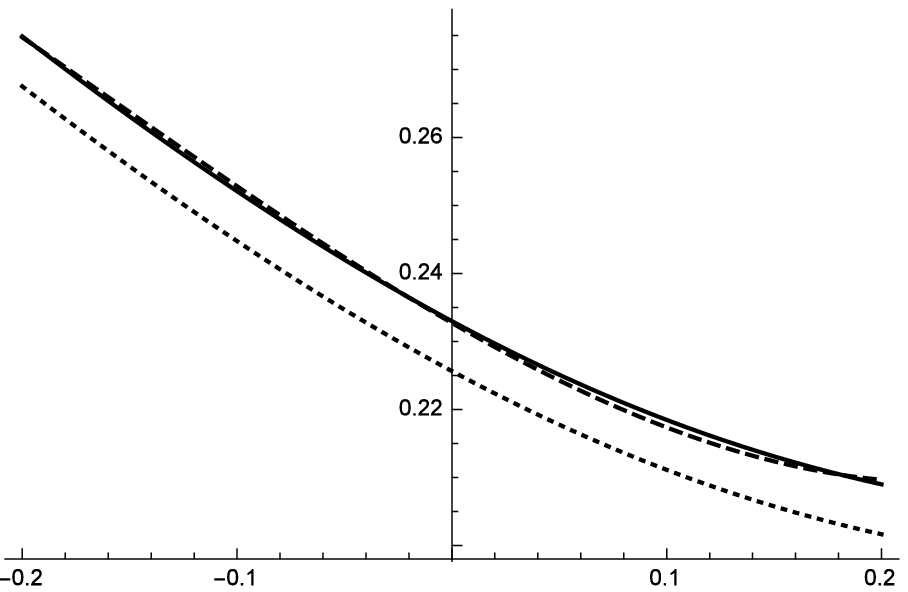}\\
$\tau=1.0$ & $\tau=1.0$ \\
\includegraphics[width=.475\textwidth,height=0.19\textheight]{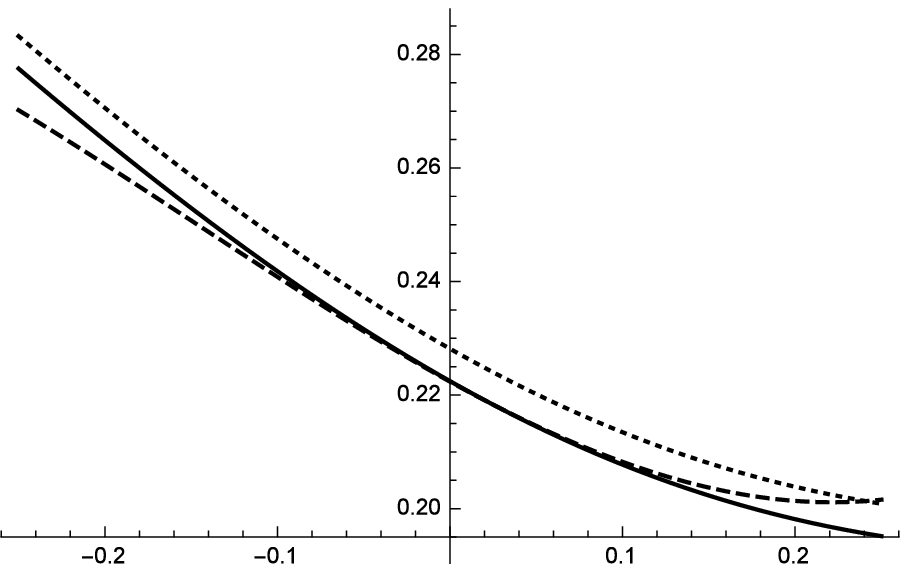}&
\includegraphics[width=.475\textwidth,height=0.19\textheight]{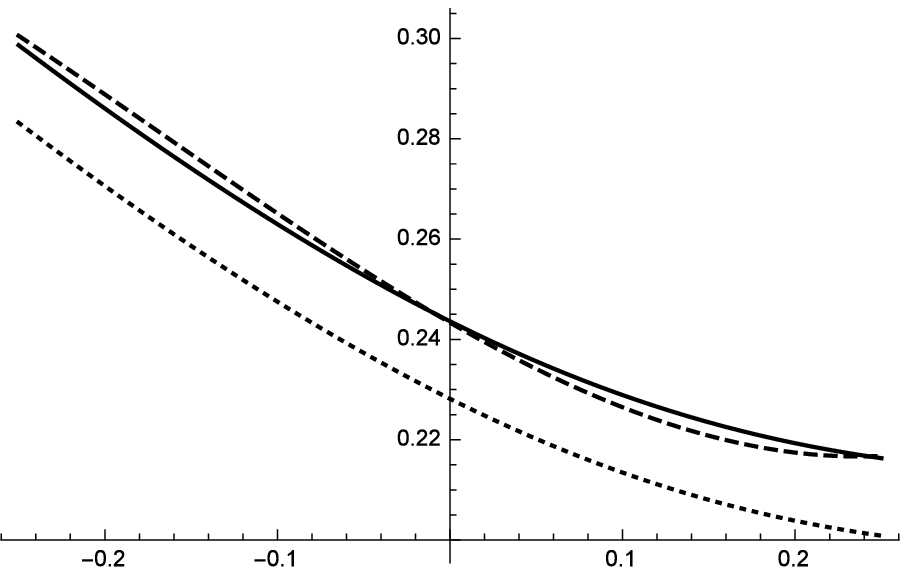}\\
$\beta= + 2$ & $\beta=-2$ \\
\end{tabular}
\caption{\small{
Exact (solid -- computed by Monte Carlo) and approximate (dashed) scaled implied volatility $\sig_Z^{(1/\beta)}(\tau,\lam)$ under SABR model dynamics plotted as a function of $\lam$.  For comparison, we also plot the exact implied volatility of the SABR model $\sig_Z^{(1)}(\tau,\lam)=\sig_X(\tau,\lam)$ (dotted).  Parameters: $\del=0.5$, $\gam=-0.5$, $\rho=0.0$ $x=0$, $y=-1.5$.
As expected, as $\tau$ increases, the solid and dotted lines diverge, while the dashed and solid lines remain close.}}
\label{fig:sabr.iv}
\end{figure}

\end{document}